\documentclass[a4paper, 11pt]{article}
\usepackage[margin=1in]{geometry} 
\usepackage{xcolor}
\definecolor{kindablue}{RGB}{48,97,140}
\usepackage{enumitem}
\usepackage{blindtext}
\usepackage{graphicx, wrapfig}
\usepackage{subcaption}
\usepackage{algorithm, algpseudocode}
\usepackage{amssymb,amsmath}
\usepackage[authoryear]{natbib}
\usepackage{url}
\usepackage{comment}
\usepackage{bm}
\usepackage{amsthm,mathtools}
\usepackage{hyperref}
\usepackage{mathabx}
\usepackage{booktabs, makecell}

\newcommand{\mN}{\mathcal N} 
\newcommand{\mX}{\mathcal X} 
\newcommand{\mR}{\mathcal R} 
 
\newcommand{\mT}{\mathcal T} 
 
\newcommand{\mB}{\mathcal B} 
\newcommand{\mZ}{\mathcal Z} 
 
\newcommand{\mS}{\mathcal S}

\newcommand{\ba}{\boldsymbol a} 
\newcommand{\bb}{\boldsymbol b}

\newcommand{\bi}{\boldsymbol i} 
\newcommand{\bj}{\boldsymbol j} 
\newcommand{\bs}{\boldsymbol s}

\newcommand{\bx}{\boldsymbol x}

\newcommand{\bA}{\boldsymbol A} 
\newcommand{\bB}{\boldsymbol B} 
\newcommand{\bC}{\boldsymbol C} 
\newcommand{\bD}{\boldsymbol D} 
 
\newcommand{\bI}{\boldsymbol I} 
 
\newcommand{\bS}{\boldsymbol S} 
\newcommand{\bW}{\boldsymbol W} 
\newcommand{\bX}{\boldsymbol X} 
 
\newcommand{\bG}{\boldsymbol G} 
\newcommand{\bH}{\boldsymbol H}

\newcommand{\bZ}{\boldsymbol Z}

\newcommand{\bone}{\boldsymbol 1} 

\newcommand{\bPi}{\boldsymbol \Pi} 
\newcommand{\bOmega}{\boldsymbol \Omega} 
\newcommand{\blambda}{\boldsymbol \lambda} 
 
\newcommand{\bgamma}{\boldsymbol \gamma} 
 
\newtheorem{theorem}{Theorem}
\newtheorem{proposition}[theorem]{Proposition}

\newcommand{\vecc}{\text{vec} }
\newcommand{\diag}{\text{diag} }
 
\newcommand{\mbbR}{\mathbb R} 

\newcommand\sdots{\hbox to 1em{.\hss.\hss.}}

\DeclareMathOperator*{\argmax}{arg\,max}
\DeclareMathOperator*{\argmin}{arg\,min}

\title{Heteroskedastic Canonical Polyadic Tensor Decomposition}
\date{}
\author{
Kyle Ritscher$^{1,2}$, Carlos~Llosa-Vite$^1$
}
\date{%
    $^1$\textit{Computer Science Research Institute}, Sandia National Laboratories, Albuquerque, NM\\%
    $^2$\textit{Department of Statistics \& Data Science}, University of California, Los Angeles, CA
}
\begin{document}

\maketitle

\begin{abstract}
When minimizing the squared-error loss, the popular CP decomposition can be interpreted as parameter inference in a Gaussian model with a low-rank mean tensor and constant variance across the tensor entries. 
We introduce heteroskedastic-CP (HCP), which models entrywise variability with a non-constant, low-rank precision tensor, and develop an alternating block-coordinate ascent method to recover both the low-rank mean and precision tensors from noisy observations. Our procedure is computationally competitive, with the same leading-order factor-update complexity as CP-ALS. We demonstrate HCP on synthetic experiments and an EEG application.

\begin{figure}[b!]
    \centering
    \noindent
\begin{minipage}[c]{0.36\textwidth}
    \centering
    \includegraphics[width=\linewidth]{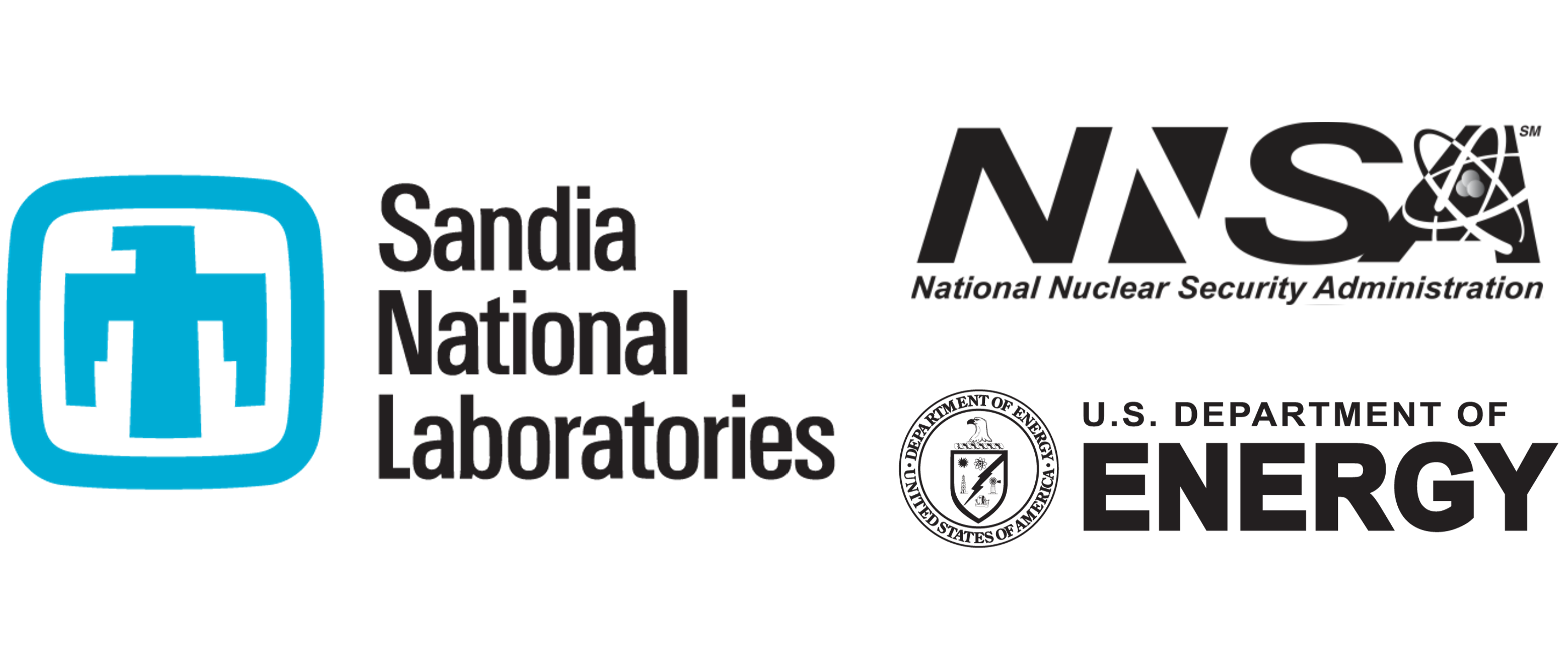}
\end{minipage}
\hfill
\begin{minipage}[c]{0.63\textwidth}
{\scriptsize \hyphenpenalty=10000 
This work was supported by the Laboratory Directed Research and Development program (Project 237043) at Sandia National Laboratories, a multimission laboratory managed and operated by National Technology and Engineering Solutions of Sandia LLC, a wholly owned subsidiary of Honeywell International Inc. for the U.S. Department of Energy’s National Nuclear Security Administration under contract DE-NA0003525. SAND2026-27105O
\par}
\end{minipage}

\end{figure}

\end{abstract}

\section{Introduction} \label{sec:intro}

The canonical polyadic (CP) decomposition, also known as the CANDECOMP/PARAFAC decomposition, is a fundamental tool in tensor analysis \citep{hitchcock1927, harshman1970parafac, carrollandchang70, koldaandbader09} that serves as a
higher-order generalization of the matrix singular value decomposition. Specifically, the CP decomposition approximates a tensor $\mathcal{X} \in \mathbb{R}^{N_1 \times N_2 \times \cdots \times N_D}$ as a sum of $R$ rank-one tensors:
\begin{equation} \label{eq:CPD}
    \mathcal{X} \approx \sum_{r=1}^{R} \lambda_r\mathbf{b}_r^{(1)} \circ \mathbf{b}_r^{(2)} \circ \cdots \circ \mathbf{b}_r^{(D)} = [\![\blambda;  \bB_1, \bB_2, \ldots, \bB_D ]\!].
\end{equation}
Here, $\circ$ denotes the outer product for $\mathbf{b}_r^{(d)}$ column vectors, and the equality is a notational switch to the Kruskal operator with factor matrices (see Section~\ref{sec:background}). The CP decomposition can be used for interpretable factorization, dimensionality 
reduction, imputation, and compression, and has found a wide range of applications \citep{murphy2013parafac, jaffe2014eemparafac, cong2015tensor, williams2018tensor, kolda2005higherorder, kolda2006tophits, opsahl2009clustering, chen2002blind, muti2005tensor, beylkin2009multivariate, acar2009unsupervised, koldaandbader09}. 

As pointed out by \cite{Hong_GCP_2020}, the $\ell_2$ optimization problem posed by the CP decomposition
\begin{equation} \label{eq:CPD-ls}
    \argmin_{\blambda\in \mathbb{R}^{R}, 
    \bB_1 \in \mathbb{R}^{N_{1} \times R}, \ldots, \bB_D \in \mathbb{R}^{N_{D} \times R}} \left\| \mathcal{X} - [\![ \blambda;\bB_1, \ldots, \bB_D ]\!] \right\|_F^2
\end{equation}

is exactly the maximum likelihood estimation (MLE) problem for $\mB \coloneqq [\![\blambda;  \bB_1, \bB_2, \ldots, \bB_D ]\!]$ under the assumption that $\mathcal{X}$ follows an entrywise independent Gaussian model with low-rank mean $\mB$, and constant variance $\sigma^2$, i.e. $\mX_{\bj} \sim \mN( \mB_{\bj} , \sigma^2),$ where $\bj = (j_1,\ldots, j_D)$ is a multi-index. However, in practice tensor data often exhibit heteroskedastic noise: some sensors are noisier than others, and measurements from certain experimental conditions may be less precise than others. Ignoring this heteroskedasticity can cause the recovered factors to be disproportionately influenced by high-variance regions of the data.

In this paper we propose the Heteroskedastic CP (HCP) Decomposition, which will both estimate the variance structure and adjust the mean decomposition accordingly. In this sense, HCP is a ``two-parameter'' extension of standard CP.
Maintaining the MLE framework, our HCP model assumes that our data tensor $\mX\in\mbbR^{N_1\times \ldots \times N_D}$ is normally distributed, entrywise independently, i.e. 
\begin{equation} \label{eq:generative-model}
    \mX_{\bj} \sim \mN( \mB_{\bj} , \mS_{\bj}^{-1}),
\end{equation}
where \emph{both} $\mB$ and $\mS$ are low-rank parameter tensors of the same dimensions as $\mX$. 

The log-likelihood (up to constants) of our proposed method is
\begin{equation}\label{eq:joint-likelihood}
    \ell(\mB, \mS \mid \mX) = \frac{1}{2}\big(\sum_{\bj} \log(\mS_{\bj}) - \sum_{\bj} \mS_{\bj}(\mX_{\bj} -\mB_{\bj})^2\Big).
\end{equation}

\subsection{Summary of Contributions} \label{sec:contributions}

In this article we propose an alternating block coordinate ascent procedure for optimizing Equation \eqref{eq:joint-likelihood} when both $\mS$ and $\mB$ are low-rank CP tensors. 
Despite the additional expressivity of HCP, this procedure is computationally efficient, with the cost per factor update dominated by the same term as in the popular CP-ALS (Alternating Least Squares) algorithm \citep{koldaandbader09}.
Our procedure relies on two novel subroutines that optimize Equation \eqref{eq:joint-likelihood} with either the precision tensor or the mean tensor held fixed. These are called Weighted CP and Gamma-rate CP, presented in Sections \ref{sec:mean} and \ref{sec:precision}, respectively. 

Since the log-likelihood of Equation~\eqref{eq:joint-likelihood} is unbounded above, the maximization problem can be degenerate in practice. We address this issue in Section~\ref{sec:heterocp} with a penalized-likelihood formulation that removes the degeneracy.

We demonstrate the improved estimation capabilities of HCP under heteroskedastic noise and showcase the unique ability of our ``two-parameter'' decomposition to adapt to unknown or misspecified generating distributions on synthetic experiments in Section~\ref{sec:experiments}. We also apply HCP to an EEG dataset in Section~\ref{sec:EEG}, where estimating the variance provides additional insights unavailable to a practitioner relying on CP alone.

\subsection{Notation \& Preliminaries} \label{sec:background}

In this section, we will provide a brief review of relevant tensor properties and discuss our notation. We provide a more detailed tensor review, define each matrix product, and provide some necessary prerequisite propositions in Appendix~\ref{sec:appendix-background}. 

We denote vectors (first-order tensors) by bold lowercase letters (e.g., $\ba$), matrices (second-order tensors) by bold uppercase letters (e.g., $\bA$), and higher-order (order three or greater) tensors by Euler script letters (e.g., $\mX$). The identity matrix of size $n \times n$ is $\bI_n$, and the length-$n$ vector of all ones is $\mathbf{1}_n$. The bracket notation $[n]$ denotes the set $\{1,\dots,n\}$. 
For a tensor $\mX \in \mathbb{R}^{N_1 \times \cdots \times N_D}$, the entry at multi-index $\bj = (j_1,\ldots,j_D)$, with $j_d \in [N_d]$, is written $\mX_{\bj}$. The following shorthands will be used throughout: $N_{-d} \coloneqq  \prod_{c \neq d} N_c$,  $N_{+} \coloneqq  \sum_{d=1}^D N_d$, and $N_\times \coloneqq  \prod_{d=1}^D N_d$. We define $\text{dv}(\mX) \coloneqq \diag(\vecc(\mX))$, where $\diag(\bx)$ is the diagonal matrix with $\bx$ as its diagonal, and $\vecc(\mX)$ denotes the vectorization. We write $\mX_{(d)} \in \mathbb{R}^{N_d \times N_{-d}}$ for mode-$d$ matricization, and $\|\mX\|_{F} = |\!|\vecc(\mX)|\!|_2$ denotes the Frobenius norm. 

We denote the Kronecker product of $\bA \in \mathbb{R}^{I\times J}$ and $\bB \in \mathbb{R}^{K\times L}$ as $\bA \otimes \bB \in \mathbb{R}^{IK\times JL}$, and the Khatri--Rao product (``matching columnwise'' Kronecker product) of $\bA \in \mathbb{R}^{I\times K}$ and $\bB \in \mathbb{R}^{J\times K}$ as $\bA \odot \bB \in \mathbb{R}^{IJ\times K}$. We denote the Hadamard (elementwise) product, division, and power by $\bA * \bB$, $\bA \oslash \bB$, and $\bA^{*b}$ respectively. We may also use each matrix product over a range of matrices, e.g. $ \odot_{d=1}^D \bA_d = \bA_1 \odot \dots \odot \bA_D$. 

The \emph{rank} of a tensor \citep{koldaandbader09} is the smallest integer $R$ such that $\mX$ can be written as a sum of $R$ rank-one tensors. Similarly, the non-negative rank \citep{limandcomon09} is the smallest integer $R$ such that $\mX$ can be written as a sum of $R$ rank-one tensors that have non-negative factors. Note however, that in this article we will only work with data tensors with unknown rank (such as $\mX$), and parameter tensors with known Kruskal form, such as
$\mB = [\![\blambda;  \bB_1, \bB_2, \ldots, \bB_D ]\!]$. Strictly speaking, if $\mB$ has a rank-$R$ Kruskal form then $\text{rank}(\mB) \leq R$. Kruskal tensors have a scaling ambiguity on the factor columns because $[\![\blambda*\bgamma;\bB_1,\bB_2,\dots, \bB_D]\!] = [\![ \blambda;\bB_1\diag(\bgamma),\bB_2,\dots, \bB_D]\!]$ holds for any vector $\bgamma$. Therefore, throughout this paper we will assume that the matrix factors of Kruskal tensors have unit column norms, while $\blambda \in \mathbb{R}_{>0}^{R}$ contains the scaling factor. Strictly speaking this normalization is not required but it avoids roundoff errors that can occur in extreme cases \citep{Ballard_Kolda_2025}. When a normalization step is required in our procedures, we use the function \textsc{ColumnNormalize($\widetilde\bB, p$)} which returns a tuple $(\bB,\blambda)$ such that $\bB$ has unit-$p$ norms and $\widetilde\bB = \bB\diag(\blambda)$ holds. We provide this function in Appendix~\ref{sec:appendix-background}.

\subsection{Related Work} \label{sec:related-work}

Leveraging the connection between maximum likelihood estimation and CP decomposition to motivate more sophisticated alternatives is not unique to our work. \cite{chiandkolda12} improves on previous non-negative tensor factorizations by explicitly modeling the sparse count tensor with the Poisson distribution rather than the Gaussian distribution, and generalized CP (GCP) decomposition \cite{Hong_GCP_2020} demonstrates the advantage of its loss function flexibility through connections to different MLE problems. 

To handle heteroskedastic noise, Weighted PARAFAC \citep{PAATERO1997223} can incorporate user-specified entrywise precisions, but it is not adaptive to unknown heteroskedasticity; its dynamic reweighting is designed for robustness to outliers, not for learning a variance structure.
Some recently proposed Tucker-parameterized models are more adaptive to unknown variance profiles. For example, the sub-Gaussian method 
in \cite{han2021optimalstatisticalcomputationalframework} replaces the SVD initialization with HeteroskedasticPCA initialization, allowing them to consider non-identically distributed entries,  
and \cite{agterberg2024statisticalinferencelowranktensors} improves HOOI (higher-order orthogonal iteration) with more robust diagonal deletion initialization.  To the best of our knowledge, however, HCP is the first tensor decomposition to treat the heteroskedastic precision tensor itself as an unknown low-rank parameter and estimate it jointly with the low-rank mean.

Heteroskedasticity has been much more extensively studied for matrix data \citep{hong2022optimallyweightedpcahighdimensional, Hong_PPCAT_2021, zhang2021heteroskedasticpcaalgorithmoptimality, Agterberg_2022, cai2020subspaceestimationunbalancedincomplete, Leeb_2021, zhang2024matrixdenoisingdoublyheteroscedastic, su2024datadrivenoptimalshrinkagesingular, behne2022fundamentallimitsrankonematrix}. We discuss our work's relation to these methods in Appendix~\ref{sec:appendix-HPCA}.

\section{Weighted CP Decomposition}
\label{sec:mean}

In this section, we optimize Equation~\eqref{eq:joint-likelihood} when the precision tensor $\mS$ is fixed. This subroutine amounts to solving a weighted CP decomposition problem.

Let $\mS \in \mathbb{R}^{N_1 \times \ldots \times N_D}_{\geq 0}$ be known ``weights'', and consider the objective function
\begin{equation} \label{eq:likelihood-mean} 
\ell(\mB \mid \mX, \mS)
=
-\dfrac{1}{2}
\sum_{\bj}
\mS_{\bj}(\mX_{\bj} - \mB_{\bj})^2. 
\end{equation} 
For entrywise positive $\mS$, Equation~\eqref{eq:likelihood-mean} corresponds to the likelihood for our generative model (Equation \eqref{eq:generative-model}) when the precision tensor $\mS$ is fixed and known. 
In the three-way case, this is the same data-fit term as in \citep{PAATERO1997223}, for which PMF3 computes Gauss--Newton updates to the three factor matrices simultaneously. In addition, for binary weights, Equation~\eqref{eq:likelihood-mean} corresponds to the masked least-squares objective used in CP tensor completion, solved with a joint gradient approach called CP-WOPT by \cite{Acar_2011_incomplete}. We explore the connection to tensor completion in Appendices~\ref{sec:model-selection} and~\ref{sec:appendix-TC}. 

We can express our optimization problem as a weighted least squares problem via the following steps:  
\begin{align}\label{eq:likelihood-mean2} 
-2\ell(\mB \mid \mX, \mS) &= 
-\dfrac{1}{2}
\sum_{\bj}
\mS_{\bj}(\mX_{\bj} - \mB_{\bj})^2 = 
\left\|
\mS^{*\frac{1}{2}}_{(d)}
*
(\mX_{(d)} - \widetilde\bB_d\bPi_{d})
\right\|_F^2
\\&\hfill=
\left\|
\operatorname{dv}\!\Big((\mS^{*\frac{1}{2}}_{(d)})^\top\Big)\Big(
\vecc(\mX_{(d)}^\top)\notag 
-
(\bI_{N_d} \otimes \bPi_{d}^\top)\vecc(\widetilde\bB_d^\top)
\Big)
\right\|_2^2
\notag
\end{align}

The second equality follows from Proposition~\ref{prop:kruskal-decomp}, isolating a single factor matrix in the typical ALS fashion, with $\widetilde\bB_d\coloneqq \bB_{d} \diag(\blambda_{\mB})$ and $\bPi_{d} \coloneqq (\odot_{c \neq d} \bB_c)^\top \in \mathbb{R}^{R_{\mB} \times N_{-d}} $. The third equality takes the transpose, vectorizes, and leverages Proposition~\ref{prop:vec-trick} to convert into a typical least squares form. The transpose is taken so that the Gram matrix becomes block diagonal. 

The weighted least-squares expression of Equation~\eqref{eq:likelihood-mean2} 
has optima characterized by the normal equations
\begin{equation}\label{eq:mean-update}
    \bG_d \vecc(\widetilde\bB_d^\top)
    =
    \vecc\!\left(
        \bPi_{d}(\mS * \mX)_{(d)}^\top
    \right),
\end{equation}
where $\bG_d$ is an $R_{\mB}N_d \times R_{\mB}N_d$ block-diagonal matrix whose 
$(i_d,i_d)$ block, for $i_d \in [N_d]$, is
$$
\bPi_{d}
    \diag\!\left((\mS_{(d)})_{i_d,:}\right)
    \bPi_{d}^\top
    \in \mathbb{R}^{R_{\mB}\times R_{\mB}}.
    $$
   
Because $\bG_d$ is block diagonal, the linear system separates across the rows of $\bB_d$ and can be solved in parallel.  
As in standard CP-ALS, one can solve these equations in multiple ways depending on numerical conditioning \citep{Ballard_Kolda_2025}. In our implementation, we use the Moore--Penrose pseudoinverse. 

To estimate $\mB \coloneqq [\![\blambda_{\mB};  \bB_1, \bB_2, \ldots, \bB_D ]\!]$ we sequentially apply Equation \eqref{eq:mean-update} to each $d = 1,2,\dots,D$ for $i_{\mB}$ iterations (or until convergence; see Appendix~\ref{sec:model-selection} for a discussion of convergence criteria). Algorithm~\ref{alg:heterocp-mean} displays the procedure, and we present a basic example in Appendix~\ref{sec:appendix-basic-ex-mean}.

\begin{algorithm}
\caption{Weighted CP Decomposition}
\label{alg:heterocp-mean}
\begin{algorithmic}[1]
    \State \textbf{Input:} data $\mathcal{X} \in \mathbb{R}^{N_1 \times \cdots \times N_D}$
    \State \textbf{Input:} weights $\mathcal{S} \in \mathbb{R}_{\geq 0}^{N_1 \times \cdots \times N_D}$ given as dense, sparse, or factored (faster computation of line \ref{algsec:Gcompute})
    \State \textbf{Input:} desired CP rank $R_{\mB}$
    \State \textbf{Input:} maximum number of iterations $i_{\mB}$
    \State \textbf{Initial Values:} $\mathbf{B}_2,\!\dots\!,\mathbf{B}_D$, unit column $\ell_2$ norms
\Function{WCP}{$\mathcal{X}, R_{\mathcal{B}}, \mathcal{S}, i_{\mB}$}
    \While{not converged and iterations $\leq i_{\mB}$}
        \For{$d = 1, \ldots, D$}
            \State $\bPi_{d} \gets (\odot_{c \neq d} \mathbf{B}_c)^\top$
                \State\label{algsec:mttkrp} $\bZ_d \gets
                (\mathcal{S} * \mathcal{X})_{(d)}\bPi_{d}^\top$
                \hfill (\texttt{MTTKRP})
            \For{$i_d = 1, \ldots, N_d$}
                \State\label{algsec:Gcompute} $\mathbf{G} \gets
                \bPi_{d}
                \operatorname{dv}\bigl((\mathcal{S}_{(d)})_{i_d,:}\bigr)
                \bPi_{d}^\top$
                \State $(\mathbf{B}_d)_{i_d,:} \gets \mathbf{b}^\top \text{where } \mathbf{b} \text{ solves } \mathbf{G}\mathbf{b} = (\bZ_d)_{i_d, :}^\top$
            \EndFor
            \State $(\mathbf{B}_d, \boldsymbol{\lambda}_{\mathcal{B}}) \gets \Call{\textsc{ColumnNormalize}}{\mathbf{B}_d, 2}$
        \EndFor
    \EndWhile
    \State \Return CP factors: $\{\boldsymbol{\lambda}_{\mathcal{B}}, \mathbf{B}_1,\dots,\mathbf{B}_D\}$
\EndFunction

\end{algorithmic}
\end{algorithm}

When $\mS$ is a dense tensor of weights, the complexity of Algorithm \ref{alg:heterocp-mean} is dominated by line \ref{algsec:Gcompute}, which has a cost of $O(R_{\mB}^2N_{\times})$ floating point operations for updating one factor matrix. If $\mS$ is sparse (as in tensor completion), then this complexity drops to $O(R_{\mB}^2\text{nnz}(\mS))$, where $\text{nnz}(.)$ denotes the number of non-zeros. This decrease from $N_{\times}$ to $\text{nnz}(\mS)$ can be of several orders of magnitude. 

Furthermore, if the rank-$R_{\mS}$ CP factorization $\mS = [\![ \blambda_{\mS};\bS_1, \bS_2, \ldots, \bS_D ]\!]$ holds (as is the case in our proposed HCP), then the Gram matrix $\bG_d$ has the closed form
\begin{equation} \label{eq:fast-gram}
    \bG_d = \sum_{r=1}^{R_{\mS}} (\blambda_{\mS})_r
    \operatorname{diag}\!\left((\bS_d)_{:,r}\right)
    \otimes \left(\underset{c\neq d}{*} \bH_{r, c} \right) ,
\end{equation}
where $\bH_{r, c}    \coloneqq    \bB_c^\top    \diag\big((\bS_c)_{:,r}\big)    \bB_c \in \mathbb{R}^{R_{\mB}\times R_{\mB}},$ 

for $r \in [R_{\mS}]$. 
Forming this matrix has a complexity of only $O\!\left(R^2_{\mB} N_+ R_{\mS}\right)$, which is considerably cheaper than $O(R_{\mB}^2N_{\times})$ or $O(R_{\mB}^2\text{nnz}(\mS))$ (see Appendix~\ref{sec:appendix-complexity} for complexity derivations). In this case, the computational reduction is so drastic that complexity is now dominated by the matricized tensor times Khatri--Rao product (MTTKRP) of line \ref{algsec:mttkrp}, which is the same operation that dominates other CP estimation algorithms such as CP-ALS, CP-APR or CP-WOPT \cite{koldaandbader09,chiandkolda12,Hong_GCP_2020}. 
So, in our implementation of Algorithm \ref{alg:heterocp-mean} we allow $\mS$ to be provided as a factored tensor.

\section{Gamma-Rate CP Decomposition}
\label{sec:precision}

Next, we consider the model of Equation \eqref{eq:generative-model} when the mean tensor $\mB$ is fixed, and derive an algorithm for estimating the positive factor matrices of $\mS$, $\bS_d$. 

The log-likelihood, up to constants, is
\begin{align}
\ell&(\mS \mid \mX, \mB)
=
\dfrac{1}{2}
\left(
\sum_{\bj} \log(\mS_{\bj})
-
\sum_{\bj}
\mS_{\bj}(\mX_{\bj} - \mB_{\bj})^2
\right)
\notag \\
&=
\dfrac{1}{2}
\mathbf{1}_{N_d}^\top
\left(
\log(\mS_{(d)})
-
\mS_{(d)} * (\mX - \mB)^{*2}_{(d)}
\right)
\mathbf{1}_{N_{-d}}
\notag \\
&=
\dfrac{1}{2}
\mathbf{1}_{N_d}^\top
\left(
\log(\widetilde\bS_d\bOmega_{d})
-
(\widetilde\bS_d\bOmega_{d}) * \mR_{(d)}
\right)
\mathbf{1}_{N_{-d}},
\label{eq:likelihood_precision}
\end{align}
where $\mR=(\mX - \mB)^{*2}$ is the squared residual, $\widetilde \bS_d \coloneqq \bS_d \text{diag}(\blambda_{\mS})$, and $\bOmega_{d} \coloneqq (\odot_{c \neq d} \bS_c)^\top \in \mathbb{R}_{>0}^{R_{\mS} \times N_{-d}}$. The final equality follows from Proposition~\ref{prop:kruskal-decomp}. 

Because 
$\mS_{\bj}    \mR_{\bj}    \sim \chi^2_1 \equiv  \operatorname{Gamma}(\frac{1}{2}, \frac{1}{2}),$ Equation~\eqref{eq:likelihood_precision} is the log-likelihood under 
\begin{equation} \label{eq:gamma-connection}
    \mR_{\bj} \sim 
    \operatorname{Gamma}\Big(
        \frac{1}{2}, \text{rate} =
        \frac{1}{2}[\![ \blambda_{\mS};\bS_1, \ldots, \bS_D ]\!]_{\bj}
    \Big).
\end{equation}
Hence, we refer to this subproblem as Gamma-rate CP decomposition, since we parametrize the rate (equivalently, the inverse mean), rather than the mean as in Section 3.3 of \cite{Hong_GCP_2020}.

Inspired by NNMF, PTF, and NNTF \cite{leeandseung00, WELLING20011255, chiandkolda12}, we develop a minorize-maximize update rule for \(\widetilde\bS_d\). While a general-purpose constrained optimizer such as GCP could solve our problem, our result guarantees monotonic non-decrease without requiring a step size, and enforces non-negativity by construction, rather than relying on box constraints. We defer the proof to Appendix~\ref{sec:appendix-proofs}. 

\begin{theorem} \label{thm:GPCP} Let $\bA, \bC \in \mathbb{R}_{\geq 0}^{m \times n}$, $\bB\in \mathbb{R}_{> 0}^{m \times p}$, and $\bOmega \in \mathbb{R}_{> 0}^{p \times n}$. Assume that each row of \(\bA\) has at least one positive entry and \(\bC\bOmega^\top\) is entrywise positive. Then the function 
$
f(\bB) = \bone_m^\top\left\{\bA*\log(\bB\bOmega) - \bC*(\bB\bOmega)
\right\}\bone_n
$
is non-decreasing under the multiplicative update
\begin{equation} \label{eq:general-update}
    \bB\leftarrow \bB* \left\{ \left [ \bA\oslash (\bB\bOmega)\right]\bOmega^\top\right\} \oslash (\bC\bOmega^\top).
\end{equation}
\end{theorem}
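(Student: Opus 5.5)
We argue by minorize--maximize, in the style of the Lee--Seung and CP-APR analyses. Fix the current iterate $\bB^{(t)} \in \mathbb{R}^{m\times p}_{>0}$. We will build a surrogate $g(\bB\mid\bB^{(t)})$ with $g \le f$ everywhere on $\mathbb{R}^{m\times p}_{>0}$ and $g(\bB^{(t)}\mid\bB^{(t)}) = f(\bB^{(t)})$. We will then show that the update \eqref{eq:general-update} is the exact maximizer of $g$. The standard chain
\[
f(\bB^{(t+1)}) \ge g(\bB^{(t+1)}\mid \bB^{(t)}) \ge g(\bB^{(t)}\mid\bB^{(t)}) = f(\bB^{(t)})
\]
then gives monotonicity. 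Note that the linear term $-\bone_m^\top\{\bC*(\bB\bOmega)\}\bone_n = -\sum_{i,k}B_{ik}(\bC\bOmega^\top)_{ik}$ is linear in $\bB$, so only the concave log term needs minorizing.

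\textbf{Minorizer.} For each $(i,j)$, write $(\bB\bOmega)_{ij} = \sum_k B_{ik}\Omega_{kj}$ and define the weights
\[
\phi_{ijk} = B^{(t)}_{ik}\Omega_{kj}/(\bB^{(t)}\bOmega)_{ij}.
\]
These are strictly positive and sum to one over $k$, because $\bB^{(t)}$ and $\bOmega$ are positive. Jensen's inequality for the concave $\log$ gives
\[
\log(\bB\bOmega)_{ij} \ge \sum_k \phi_{ijk}\log\!\big(B_{ik}\Omega_{kj}/\phi_{ijk}\big),
\]
with equality at $\bB=\bB^{(t)}$. Multiplying by $A_{ij}\ge 0$ preserves the inequality, which yields
\[
g(\bB\mid\bB^{(t)}) = \sum_{i,j,k}A_{ij}\phi_{ijk}\log B_{ik} - \sum_{i,k}B_{ik}(\bC\bOmega^\top)_{ik} + \text{const}.
\]

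\textbf{Maximizing the surrogate.} The surrogate $g$ separates over the entries $B_{ik}$. Each term has the form $a_{ik}\log B_{ik} - c_{ik}B_{ik}$, where
\[
a_{ik} = \sum_j A_{ij}\phi_{ijk} = B^{(t)}_{ik}\,\big([\bA\oslash(\bB^{(t)}\bOmega)]\bOmega^\top\big)_{ik}, \qquad c_{ik} = (\bC\bOmega^\top)_{ik} > 0 .
\]
Since $a_{ik} > 0$, each term is strictly concave on $(0,\infty)$ with unique maximizer $B_{ik} = a_{ik}/c_{ik}$, which is exactly the update \eqref{eq:general-update}.

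\textbf{Main obstacle: positivity.} The delicate step is checking $a_{ik}>0$, so that the new iterate stays in $\mathbb{R}_{>0}^{m\times p}$ and the argument can be iterated. Row $i$ of $\bA$ has some $A_{ij}>0$, and $\phi_{ijk}>0$, so $a_{ik}>0$ for every $k$. This is precisely where the hypothesis on the rows of $\bA$ enters, together with the positivity of $\bB$ and $\bOmega$. Positivity of $\bC\bOmega^\top$ ensures the maximizer is finite and the division is defined. If some $a_{ik}$ were $0$, the supremum would be approached only as $B_{ik}\to 0$, leaving the domain of $f$. With both hypotheses in place, $f$ is finite along the iterates and the MM chain applies.
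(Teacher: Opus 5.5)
Your proposal is correct and is essentially the paper's proof. It uses the same Jensen minorizer (your weights $\phi_{ijk}$ are the paper's $\alpha_{rij}$), the same MM chain $f(\bB^{(t+1)}) \ge g(\bB^{(t+1)}\mid\bB^{(t)}) \ge g(\bB^{(t)}\mid\bB^{(t)}) = f(\bB^{(t)})$, and the same coordinatewise stationarity condition that recovers the update. You go slightly further than the paper in two places, both of which it leaves implicit: you note that each separable term $a_{ik}\log B_{ik} - c_{ik}B_{ik}$ is strictly concave, so the stationary point really is the maximizer; and you show how the row-positivity of $\bA$ and the positivity of $\bC\bOmega^\top$ keep the iterates in $\mathbb{R}_{>0}^{m\times p}$.
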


For Equation \eqref{eq:likelihood_precision}, the update rule translates\footnote{$\bC\bOmega^\top = \mR_{(d)}(\bOmega_{d})^\top$ will have an entry of $0$ if an entire length $N_{-d}$ row of the matricized squared residual tensor, $\mR_{(d)}$, is exactly $0$. Such exact row-wise interpolation is unlikely when $\mX$ is noisy, and the constraint discussed in Section~\ref{sec:constraint} prevents this altogether.} 
(set $\bA$ to $\bone_{N_d} \bone_{N_{-d}}^\top$, 
$\bB$ to \(\widetilde{\bS}_d\), 
$\bOmega$ as $\bOmega_{d}$ and 
$\bC$ as $\mR_{(d)}$) to
\begin{equation}\label{eq:MM-update}
     \widetilde\bS_d \gets \widetilde\bS_d * \left(  (\widetilde\bS_d \bOmega_{d})^{*-1} \bOmega_{d}^\top\right)  \oslash\left( \mR_{(d)} \bOmega_{d}^\top\right). 
\end{equation}
We leave Theorem~\ref{thm:GPCP} in its general form due to its applicability beyond Gamma-rate CP. For example, the CP-APR subproblem and NNMF \citep{chiandkolda12, leeandseung00} are special cases where $\bA$ is the data and $\bC = \bone_{N_d} \bone_{N_{-d}}^\top$, and we can use Theorem~\ref{thm:GPCP} for tensor completion by letting $\bA$ be an observation mask (see Appendix~\ref{sec:appendix-TC}).

To estimate $\mS \coloneqq [\![\blambda_{\mS};  \bS_1, \bS_2, \ldots, \bS_D ]\!]$, we sequentially apply Equation \eqref{eq:MM-update} to each $d = 1,2,\dots,D$ for $i_{\mS}$ iterations or until convergence. Algorithm~\ref{alg:heterocp-precision} displays the procedure, and we present a basic example in Appendix~\ref{sec:appendix-basic-ex-var}. Note, in this algorithm we normalize with $\ell_1$ rather than $\ell_2$ norm.

\begin{algorithm}[H]
\caption{Gamma-rate CP Decomposition} \label{alg:heterocp-precision}
\begin{algorithmic} [1]
\State \textbf{Input:} Data: \( \mR \in \mathbb{R}_{\geq 0}^{N_1 \times N_2 \times \cdots \times N_D} \)
\State \textbf{Input:} Desired rank of results: $R_{\mS}$
\State \textbf{Input:} Maximum inner iterations: $i_{\text{in}}$
\State \textbf{Input:} Maximum outer iterations $i_{\mS}$
\State \textbf{Initial values:} $\blambda_{\mS}\in\mbbR^{R_{\mS}}_{>0}$, entrywise positive $(\bS_1,\!\dots\!,\bS_D)$, with unit column $\ell_1$ norms 
\Function{GammaCP}{$\mR, R_{\mS}, i_{\mS}, i_{\text{in}}$}
    \While{not converged and iterations $\leq i_{\mS}$}
        \For{$d = 1, \ldots, D$}
            \State $\bOmega_{d} \gets (\odot_{c \neq d} \bS_c)^\top$
            \State $\bD_d \gets \mR_{(d)} \bOmega_{d}^\top$
            \State \( \tilde \bS_d \gets \bS_d \cdot \text{diag}(\blambda_{\mS}) \) 
            \For{$i = 1, \ldots, i_{\text{in}}$}
                \State $\tilde \bS_d \gets \tilde \bS_d * \left(  (\tilde  \bS_d \bOmega_{d})^{*-1} \bOmega_{d}^\top\right)  \oslash \bD_d$
            \EndFor
            \State $\{\bS_d,\blambda_{\mS}\}\gets \textsc{ColumnNormalize}(\tilde\bS_d,1)$
        \EndFor
    \EndWhile
    \State \Return CP Factors: $\{\blambda_{\mS},\bS_1,\dots,\bS_D\}$
\EndFunction
\end{algorithmic}
\end{algorithm}

\section{Heteroskedastic CP Decomposition}
\label{sec:heterocp}
Now, we provide the full procedure in Algorithm~\ref{alg:heterocp} for jointly estimating a low-rank mean tensor, $\mB$, and a low-rank precision tensor $\mS$, from an observed noisy tensor $\mX$.
As discussed in Section~\ref{sec:intro}, the idea is to alternate between estimating $\mB$ with Algorithm~\ref{alg:heterocp-mean}, assuming $\mS$ is fixed and known, and estimating $\mS$ with Algorithm~\ref{alg:heterocp-precision}, assuming $\mB$ is fixed and known. 

For initialization, we use standard CP decomposition to initialize $\bB_d$, and Algorithm~\ref{alg:heterocp-precision} to initialize $\bS_d$. In our experiments (Sections~\ref{sec:experiments},~\ref{sec:EEG}), ranks will be either known or tailored to the application. Appendix~\ref{sec:model-selection} discusses data-driven rank selection and other settings (e.g. convergence criteria, maximum iterations) for Algorithms~\ref{alg:heterocp-mean}, \ref{alg:heterocp-precision}, and \ref{alg:heterocp}.
 \begin{algorithm} 
\caption{Heteroskedastic CP Decomposition}
\label{alg:heterocp}
\begin{algorithmic}[1] 
\State \textbf{Input:} Data \( \mX \in \mathbb{R}^{N_1 \times N_2 \times \cdots \times N_D} \) 
\State \textbf{Input:} Desired ranks of results: $R_{\mB}, R_{\mS} > 0$
\State \textbf{Input:} Submethod iterations: $i_{\mB}, i_{\mS}, i_{\text{in}} > 0$
\State \textbf{Input:} Maximum iterations: $i_{\text{tot}} > 0$
\State \textbf{Initial values:} $(\bB_2,\!\dots\!,\bB_D)$ as in Alg. \ref{alg:heterocp-mean}
\State \textbf{Initial values:} $(\blambda_{\mS},\bS_1,\!\dots\!,\bS_D)$ as in Alg. \ref{alg:heterocp-precision}

\While{not converged and iterations $\leq i_{\text{tot}}$}
    \State\label{alg:mean_subrout} $\{\boldsymbol{\lambda}_{\mathbf{B}}, \mathbf{B}_1,\dots,\mathbf{B}_D\}\gets \Call{\textsc{WCP}}{\mathcal{X}, R_{\mB}, \mS, i_{\mB}}$
    \State\label{alg:residualcomputation}  $\mR \gets (\mX  - [\![ \boldsymbol{\lambda}_{\mathbf{B}}; \mathbf{B}_1,\dots,\mathbf{B}_D]\!])^{*2}$
    \State\label{alg:var_subrout} $\{\boldsymbol{\lambda}_{\mathbf{S}}, \mathbf{S}_1,\dots,\mathbf{S}_D\}\gets \Call{\textsc{GammaCP}}{\mR, R_{\mS}, i_{\mS}, i_{\text{in}}}$
\EndWhile
\State \textbf{Output:} \( \bB_d \in \mathbb{R}^{N_d \times R_{\mB}} \) for \( d = 1, \ldots, D \)
\State \textbf{Output:} \( \bS_d \in \mathbb{R}_{>0}^{N_d \times R_{\mS}} \) for \( d = 1, \ldots, D \)
\end{algorithmic}
\end{algorithm}

The time complexity of Algorithm \ref{alg:heterocp} is
\begin{equation} \label{eq:complexity}
    O\left(
i_{\text{tot}}
\left[
i_{\mB}D N_\times R_{\mB}
+
i_{\mS} i_{\text{in}} D N_\times R_{\mS}
\right]
\right),
\end{equation}
where the first and second terms correspond to Algorithms \ref{alg:heterocp-mean} and \ref{alg:heterocp-precision}, respectively.
For reference, a sweep of CP-ALS has a cost of $O(DN_{\times}R_{\mB})$ which is the same cost as our Algorithm \ref{alg:heterocp-mean}, and, in most practical applications, Algorithm~\ref{alg:heterocp-precision}, as well, 
giving HCP similar performance to CP-ALS.
Deeper discussion and full derivations are deferred to Appendix~\ref{sec:appendix-complexity}. 

In practice, however, Algorithm~\ref{alg:heterocp} will often run into the degeneracy in the maximization problem. To see the mechanism, consider a single index $\bj$ where an estimate $\widehat {\mB_{\bj}}$ drives the residual $(\mX_{\bj} - \widehat {\mB_{\bj}})$ to zero; then, the corresponding estimate $\widehat {\mS_{\bj}}$, and the likelihood alongside it, will increase to $\infty$, unless the low rank coupling of entries forces enough residual cost elsewhere. This degeneracy is analogous to finding the variance of a single number, and occurs, for instance, in Gaussian mixture models when a cluster is selected to contain a single point. The remainder of this section describes modifications to Algorithm~\ref{alg:heterocp} that address this degeneracy. In Appendix~\ref{sec:appendix-heterocp} we give an expanded implementation, Algorithm~\ref{alg:heterocp-full}, that includes these modifications. 

\subsection{Rank-One Precision Tensor} \label{sec:rank-one} 

When the precision rank is one, the MM update (Equation \eqref{eq:MM-update}) reduces to
$$
\bs_d \gets N_{-d}\left(\mR_{(d)} \bOmega_{d}\right)^{*-1}, 
$$
where $\bs_d \in \mathbb{R}^{N_{d}}$  and $\bOmega_{d} = \bs_D \odot \dots \odot \bs_{d+1} \odot \bs_{d-1} \odot \dots \odot \bs_1 \in \mathbb{R}^{N_{-d}}$ are now \textit{vectors}. For fixed other factors, or along sequences where \(\bOmega_d\) is bounded below, an entry of this weighted residual contraction only tends to zero when the entire corresponding residual row tends to zero. Since this is unlikely whenever $\mX$ is noisy, we have found that setting $R_{\mS} =1$ tends to lead to a stable optimization problem in practical settings. 

\subsection{Partial Heteroskedasticity}
\label{sec:partial-hetero}
Our second adjustment is  ``partial heteroskedasticity'', i.e., allowing the precision tensor $\mS$ to vary along certain modes but not others. This assumption is common in applications where certain modes correspond to the dimension of a process, and other modes to samples of the process, as in MANOVA \citep{johnson07}, factor analysis  \citep{AndersonRubin1956}, or heteroskedastic PCA \cite{hong2022optimallyweightedpcahighdimensional, Hong_PPCAT_2021, zhang2021heteroskedasticpcaalgorithmoptimality}.

To simplify notation (we can always permute the tensor modes), we will assume the first $P \leq D$ modes of $\mS$ are heteroskedastic, while the final $Q = D - P$ modes are homoskedastic. 
This means that $\mB = [\![ \blambda_{\mB};\bB_1, \ldots, \bB_{D}]\!]$ as usual, while $\mS$ is written, 
\begin{equation}\label{eq:S_repeated}
    \mS = [\![\blambda_{\mS}; \bS_1, \ldots, \bS_P]\!] \circ \bone_{N_{P+1}} \circ \ldots \circ \bone_{N_{D}} =
    [\![\blambda_{\mS}; \bS_1, \ldots, \bS_P, \bone_{N_{P+1}}\bone_{R_{\mS}}^\top, \ldots, \bone_{N_{D}}\bone_{R_{\mS}}^\top]\!].
\end{equation}

Implementing this restriction is simple. In Algorithm \ref{alg:heterocp}, we replace $\mR$ in line \ref{alg:residualcomputation} with its average over the homoskedastic modes: for $\bi \in [N_{1}] \times \ldots \times [N_{P}]$
\begin{equation} \label{eq:avg-res}
    \mR_{\bi} = \frac{1}{\prod_{q = 1}^Q N_{P+q}} \sum_{\bj \in [N_{P+1}] \times \ldots \times [N_D]} (\mX_{(\bi, \bj)} - \mB_{(\bi, \bj)})^{2}. 
\end{equation}
The only other change is for initializing the subroutines; line \ref{alg:mean_subrout} now uses the second equality of Equation \eqref{eq:S_repeated} for $\mS$, and line \ref{alg:var_subrout} only initializes the first $P$ modes.

When leveraging this restriction, the method is far less likely to suffer from the degeneracy discussed; even with $Q = 1$, $\ell(\mB, \mS \mid\mX) \to \infty$ would require small residuals for an entire fiber. Observing the unbounded likelihood is still possible, though, particularly when $N_{P+1}, \ldots, N_{D}$ are small and $\mS$ has high rank.

\subsection{Constrained Residual Tensor} \label{sec:constraint}

We can avoid $\ell(\mS \mid\mX, \mB) \to \infty$ (and prevent divisions by zero) altogether by replacing $\mR$ with $\mR + \delta$ in the $\mS$ update, for the scalar $\delta > 0$. This approach changes the log-likelihood from Equation \eqref{eq:likelihood_precision} to
\begin{equation} \label{eq:penalized-likelihood}
\ell_{\delta}(\mS \mid \mX,\mB)
\coloneqq \dfrac{1}{2}\,
\mathbf{1}_{N_d}^{\top}
\Big[
    \log(\widetilde\bS_d\bOmega_{d})  -
    (\mR_{(d)}\!+\!\delta) * \widetilde\bS_d\bOmega_{d}
\Big]
\mathbf{1}_{N_{-d}} .
\end{equation}
The following theorem justifies optimizing $\ell_{\delta}$ as penalized estimation of $\ell$.

\begin{theorem}\label{thm:penalty}
Consider the scalar $\delta > 0$. The functions $\ell_{\delta}(\mS| \mX, \mB)$ (Equation \eqref{eq:penalized-likelihood}) and $\ell(\mS| \mX, \mB)$ (Equation \eqref{eq:likelihood_precision}) are related as 
$$\ell_{\delta}(\mS| \mX, \mB) =  l(\mS| \mX, \mB) - \frac{\delta}{2} \| \mS\|_{1}.$$
\end{theorem}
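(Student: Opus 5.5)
The plan is to expand the bracket in Equation~\eqref{eq:penalized-likelihood} by linearity and recognize the two pieces. Distributing the Hadamard product over the sum $\mR_{(d)}+\delta\,\bone_{N_d}\bone_{N_{-d}}^\top$ gives
\begin{equation*}
\ell_{\delta}(\mS\mid\mX,\mB)
= \tfrac{1}{2}\bone_{N_d}^\top\Big[\log(\widetilde\bS_d\bOmega_d) - \mR_{(d)}*\widetilde\bS_d\bOmega_d\Big]\bone_{N_{-d}}
- \tfrac{\delta}{2}\,\bone_{N_d}^\top\big(\widetilde\bS_d\bOmega_d\big)\bone_{N_{-d}} .
\end{equation*}
Here I use that $\delta*\bM=\delta\bM$ when the scalar $\delta$ is read as the constant matrix $\delta\bone\bone^\top$, and that $\bone^\top(\cdot)\bone$ is linear. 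The first term is exactly Equation~\eqref{eq:likelihood_precision}, i.e.\ $\ell(\mS\mid\mX,\mB)$.

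For the second term, I invoke Proposition~\ref{prop:kruskal-decomp}, as in the derivation of Equation~\eqref{eq:likelihood_precision}. It gives $\widetilde\bS_d\bOmega_d=\mS_{(d)}$, so
\[
\bone_{N_d}^\top\mS_{(d)}\bone_{N_{-d}}=\sum_{\bj}\mS_{\bj}.
\]
Matricization only permutes entries, so this sum is over all entries of $\mS$. The remaining step is to identify $\sum_{\bj}\mS_{\bj}$ with $\|\mS\|_1$, interpreted as the entrywise $\ell_1$ norm $\|\vecc(\mS)\|_1=\sum_{\bj}|\mS_{\bj}|$.

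This identification is the only point needing care. It holds because $\mS$ is entrywise nonnegative, indeed positive. The factors $\bS_c$ and weights $\blambda_{\mS}$ are positive by construction: the theorem hypotheses on $\bB,\bOmega$ in Theorem~\ref{thm:GPCP}, the positive initialization in Algorithm~\ref{alg:heterocp-precision}, and the multiplicative update~\eqref{eq:MM-update} all preserve positivity. Moreover, $\log(\mS_{\bj})$ must be defined for $\ell$ to make sense. Hence $|\mS_{\bj}|=\mS_{\bj}$, and substituting back yields $\ell_\delta=\ell-\frac{\delta}{2}\|\mS\|_1$.

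Finally, I note that the identity does not depend on the choice of mode $d$ used in the matricized expression, since both sides are mode-independent sums over all entries. No real obstacle is expected. The only subtlety is stating explicitly that $\|\cdot\|_1$ denotes the entrywise norm and that positivity of $\mS$ removes the absolute values.
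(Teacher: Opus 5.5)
Your proposal is correct and follows the paper's proof: split $(\mR_{(d)}+\delta)*\widetilde\bS_d\bOmega_d$ by linearity, recognize the first part as $\ell(\mS\mid\mX,\mB)$, and identify $\frac{\delta}{2}\sum_{i,j}(\widetilde\bS_d\bOmega_d)_{i,j}$ with $\frac{\delta}{2}\|\mS\|_1$ using entrywise positivity of $\mS$. Your added remarks are fine but not needed for the identity: invoking Proposition~\ref{prop:kruskal-decomp} explicitly, and justifying positivity via initialization and the multiplicative update where the paper just says ``by definition.''
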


We defer the proof to the appendix, and note that Theorem~\ref{thm:penalty} holds for $\ell(\mB, \mS|\mX)$ and $\ell_{\delta}(\mB, \mS|\mX)$ as well. Implementation is also straightforward; we simply add $\delta$ to $\mR$ in line \ref{alg:residualcomputation} of Algorithm \ref{alg:heterocp}. We still inherit the non-decreasing result in Theorem~\ref{thm:GPCP}, as the residual modification will not violate the assumptions of the theorem. The $\mB$ update still solves $\ell(\mB |\mX, \mS)$ since the penalty only depends on $\mS$.

Examining $\ell(\mB |\mX, \mS)$ in the context of Theorem~\ref{thm:penalty} allows us to simply see that the maximization problem avoids the degeneracy because it is now bounded: 
\begin{equation}
\begin{aligned}
        \ell(\mB |\mX, \mS) &= \dfrac{1}{2}
\left(
\sum_{\bj} \log(\mS_{\bj})
-
\sum_{\bj}
\mS_{\bj}\Big((\mX_{\bj} - \mB_{\bj})^2 + \delta\Big) \right) \\
&\leq \dfrac{1}{2} \left(
\sum_{\bj} \log(\mS_{\bj})
-
\sum_{\bj}
\mS_{\bj}\delta \right) \leq -\frac{N_{\times}}{2}(1+\log \delta)
\end{aligned}
\end{equation}
The first inequality follows from $(\mX_{\bj} - \mB_{\bj})^2 \geq 0$, and the second, from the scalar function $\log s - \delta s$ being maximized at $s = \delta^{-1}$. This bound, combined with the inherited non-decreasing factor updates from Theorem~\ref{thm:GPCP} (and assuming mean factor updates in solve the linear system exactly, so the block maximization in Algorithm~\ref{alg:heterocp-mean} is also non-decreasing), shows that our procedure applied to $\ell(\mB |\mX, \mS)$ converges to a finite limit. 

Unless specified otherwise, we set $\delta$ to be $\sqrt{\text{Machine-Precision}} \coloneqq \sqrt{\epsilon_{\mathrm{mach}}}$.  

\section{Experiments} \label{sec:experiments}

In this section we perform synthetic experiments to demonstrate the advantages of HCP. For clarity, in Appendix~\ref{sec:appendix-experiment-table}, we summarize the experimental design choices across all experiments.

\subsection{Rotating Butterfly} \label{sec:butterfly}

\begin{wrapfigure}{r}{0.4\textwidth} 
    \vspace{-2em}
    \centering
    \includegraphics[width=\linewidth]{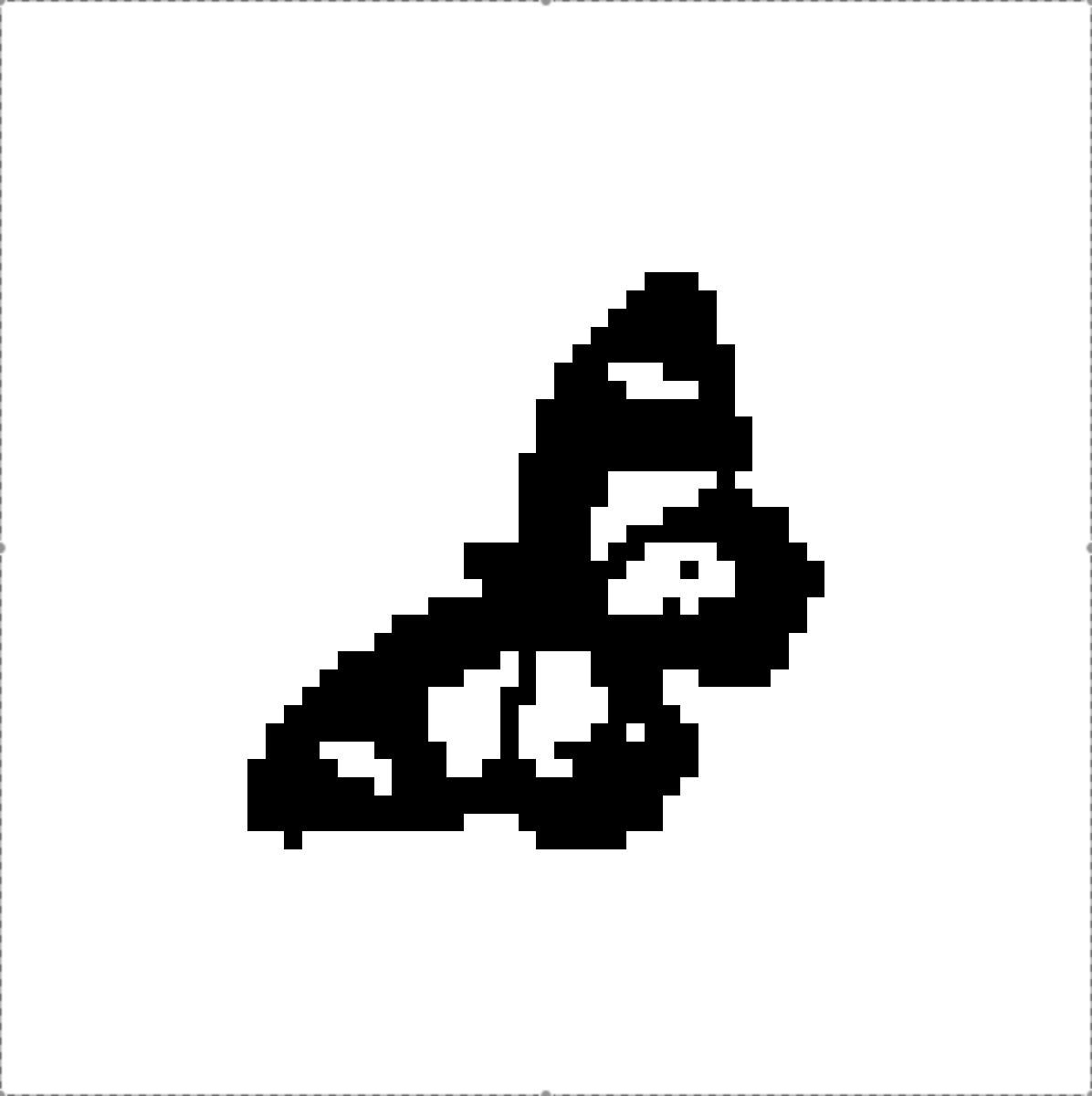} 
    \caption{$200 \times 200$ pixelated butterfly as the first frontal slice of $\mB$. }
    \label{fig:frame1}
\end{wrapfigure}
Our first experiment demonstrates the improvement of HCP over existing methods in the presence of heteroskedastic noise. We model a $200 \times 200$ binary matrix representing a pixelated butterfly image taken from MPEG-7 and available in the \texttt{fdasrvf} R package \citep{tucker2026fdasrvf}, displayed in Figure~\ref{fig:frame1}, where the third mode contains approximately \(\pi/8\)-radian pixelized rotations of the image over a full rotation (so $\mB \in \mathbb{R}^{200 \times 200 \times 17}$).

Then, representing the common situation where the noise is largest where the signal is, we generate our precision tensor to have small entries in the center, and constant along the third mode (which corresponds to rotation). The precision frontal slices have maximum value of two in the corners, and decay exponentially toward the center with a rate $\lambda$, which  ``controls'' the level of heteroskedasticity ($\lambda = 0$ represents constant entrywise variance). The precision tensor $\mS$ is then standardized to have an average variance of $0.5$, ensuring a constant SNR between values of $\lambda$. 
Figure~\ref{fig:Sgen} demonstrates a slice of $\mS$  with $\lambda \in \{1, 2, 4\}$. 

\begin{figure}[ht]
    \centering
    \includegraphics[width=0.8\textwidth]{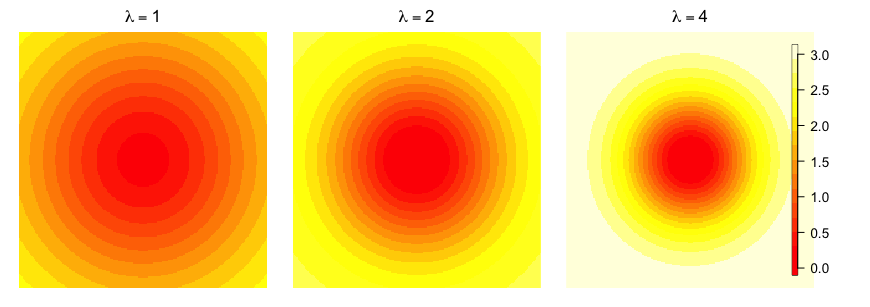} 
    \caption{Frontal slice of $\mS$ following heteroskedastic variance construction, for different values of $\lambda$.}
    \label{fig:Sgen}
\end{figure}

We generate $\mX_{\bj} \sim \mN(\mB_{\bj}, \mS_{\bj}^{-1})$, where $\mS$ has heteroskedasticity levels given by $\lambda \in \{0, 0.5, 1, 2, 3, 4, 6\}$, and fit HCP and CP. We chose mean rank $R_{\mB} = R = 16$ such that CP's mean recovery in the $\lambda = 0$ (constant variance) case was strong, and set $R_{\mS} = 3$. Figure~\ref{fig:mse-comp} displays the mean relative error over $10$ simulations in terms of Frobenius norm, $\frac{\| \widehat{\mB} - \mB\|_F}{\|\mB\|_F}$, and we demonstrate the visual recovery for a single slice in Figure~\ref{fig:slice-ex}. As we can see, when there is no heteroskedasticity ($\lambda = 0$) or low heteroskedasticity, the methods perform roughly the same, but as $\lambda$ grows CP degrades significantly, while HCP does not.

\begin{figure}[H]
    \centering
    \includegraphics[width=0.64\textwidth]{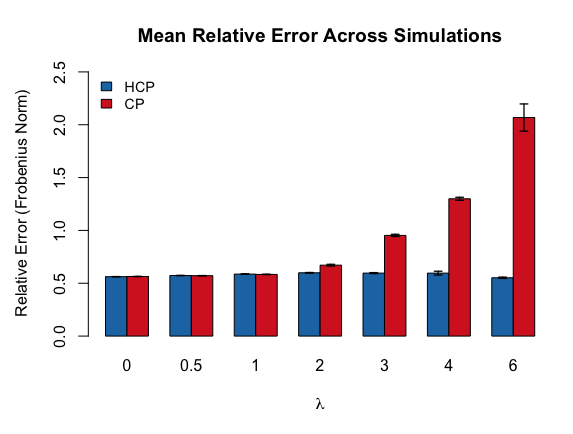}
    \caption{Comparison of relative errors in estimates of \(\mB\) at varying heteroskedasticity levels.}
    \label{fig:mse-comp}
\end{figure}

\begin{figure}[H]
    \centering
    \includegraphics[width=0.8\textwidth]{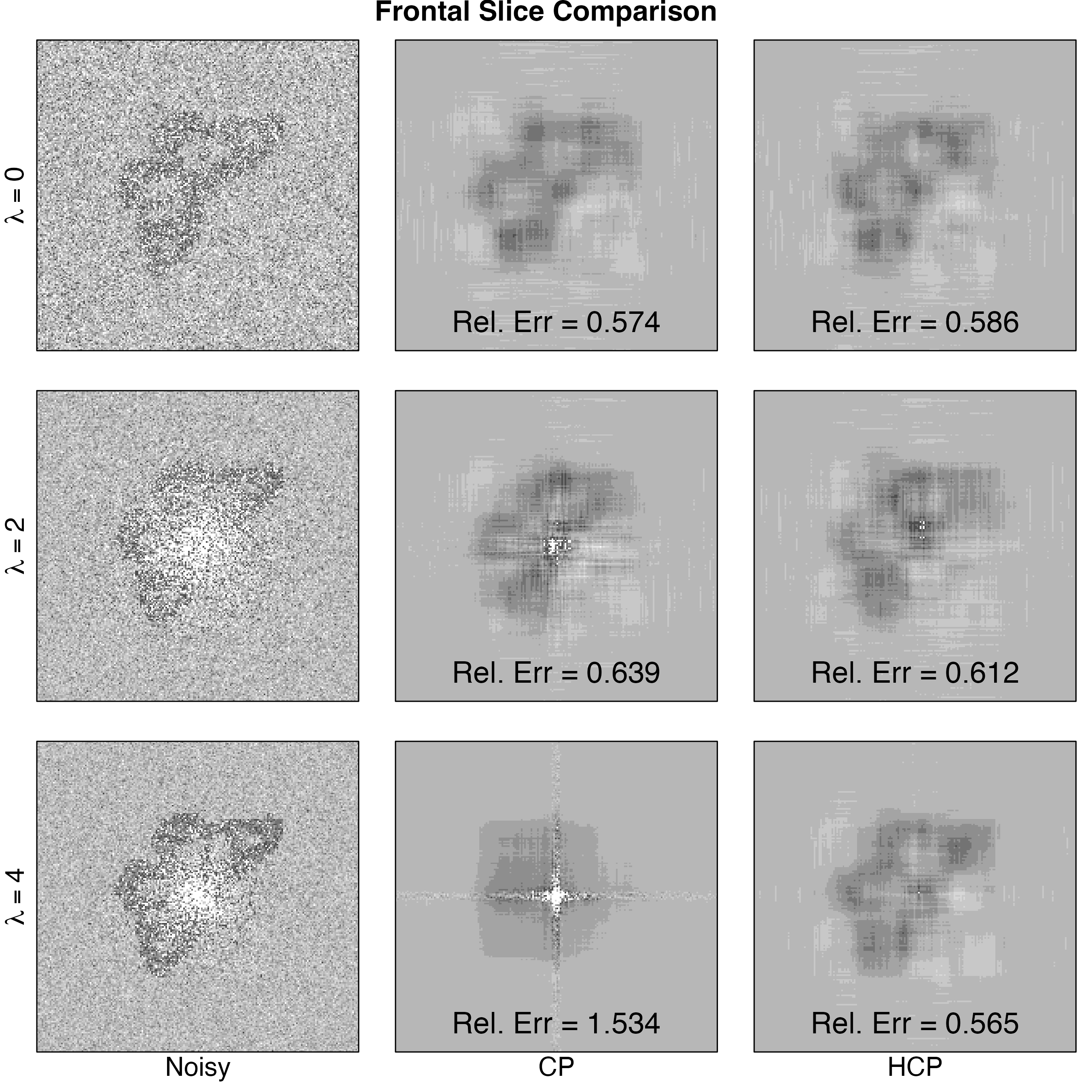}     \caption{Visualization of $9$th slice of the noisy observed tensor $\mX$ and the CP and HCP estimates $\widehat{\mB}$, for varying heteroskedasticity levels.}
    \label{fig:slice-ex}
\end{figure}

\subsection{Misspecified Generating Distribution} \label{sec:dist-study}
For our second experiment, we explore another advantage of HCP: by estimating both the conditional mean tensor and a second moment or variance-related tensor, HCP can be more safely employed when the exact noise model is not known.

We construct a simulation in which the true mean tensor \( \mB = [\![\blambda_{\mB};\bB_1,\bB_2,\bB_3]\!] \in \mathbb{R}^{40 \times 32 \times 48} \) has low CP rank \(R=3\) with $1/R$ of the entries in each column of $B_1,B_2,B_3$ generated from $\operatorname{Unif}(10,20)$, and the rest from $\operatorname{Unif}(1,2)$. The noisy observations are then generated from four distributions with different forms of heteroskedasticity:
\[
X_{\bj} \mid \mu_{\bj} \sim
\begin{cases}
\text{Gaussian}(\mu_{\bj}, \sigma^2) 
    & \operatorname{Var}(X_{\bj} \mid \mu_{\bj}) \, \propto \, 1 \\
\text{Poisson}(\mu_{\bj}) 
    & \operatorname{Var}(X_{\bj} \mid \mu_{\bj}) \, \propto \, \mu_{\bj} \\
\text{Gamma}(\alpha, \text{rate} = \frac{\alpha}{\mu_{\bj}}) 
    & \operatorname{Var}(X_{\bj} \mid \mu_{\bj}) \, \propto \, \mu_{\bj}^2 \\
\text{Inv-Gaussian}(\mu_{\bj}, \lambda)
    & \operatorname{Var}(X_{\bj} \mid \mu_{\bj}) \, \propto \, \mu_{\bj}^3 
\end{cases},
\]
where \(\mu_{\bj}\) denotes the corresponding entry of the true mean tensor \( \mathcal{\bB} \).
We calibrated the free dispersion parameters $\sigma^2,\alpha, \text{ and }\lambda$ so that the average conditional variance was the same across data-generating mechanisms. Using the Poisson model, which has no additional dispersion parameter, as the reference, our target average variance is $\frac{1}{n}\sum_i \mu_i$, so Gaussian standard deviation $\sigma$, Gamma shape parameter $\alpha$, and inverse Gaussian shape parameter $\lambda$ are set at 
\[
\sigma^2 = \frac{1}{n}\sum_i \mu_i, \qquad
\alpha = \frac{\frac{1}{n}\sum_i \mu_i^2}{\frac{1}{n}\sum_i \mu_i}, \qquad 
\lambda = \frac{\frac{1}{n}\sum_i \mu_i^3}{\frac{1}{n}\sum_i \mu_i}. 
\]

Figure~\ref{fig:dist-experiment} summarizes the relative errors in $\widehat{\mB}$ over $50$ simulations, comparing the mean estimates from HCP, CP, and GCP decompositions \citep{Hong_GCP_2020} using Poisson, Gamma, and Inv-Gaussian losses (GCP implementation details are in Appendix~\ref{sec:appendix-gcp}). We use ranks $R = R_{\mB} = R_{\mS} = 3$. As expected, the best fit comes from the correct distributional loss, but these methods can each struggle under different distributions. In contrast, HCP performs strongly in all cases, because even though our method has an inherent Gaussian assumption, estimating two moments allows it to be more flexible and adapt to model misspecification. 

\begin{figure}[]
    \centering
    \includegraphics[width=0.8\textwidth]{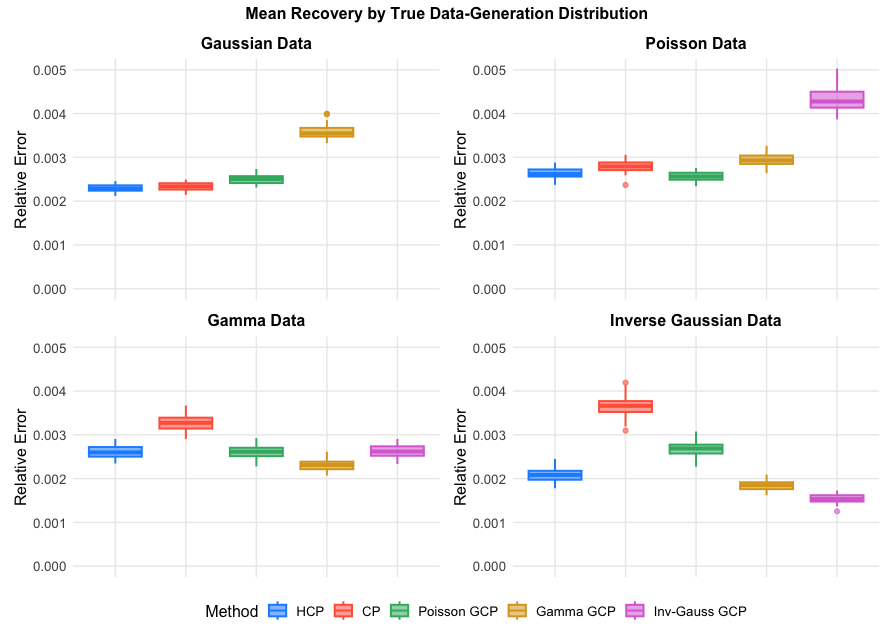}
    \caption{
    Panels correspond to a different true observation models, and boxplots show the $\widehat{\mB}$ recovery in relative Frobenius error over replicates for HCP, CP, and GCP methods using Poisson, Gamma, and inverse Gaussian losses.
    }
    \label{fig:dist-experiment}
\end{figure}

\subsection{Parameter Recovery Benchmark} \label{sec:simulation}

Finally, we conduct a simulation study across varying tensor sizes, $m_{\text{mult}} \cdot \left(10 \times 8 \times 12 \right)$, with $m_{\text{mult}} \in \{1.5, 2, 3, 6\}$, where we know the true low-rank tensors $\mB$ and $\mS$. We generate a partially heteroskedastic $\mS$ that is homoskedastic along the third mode, and we estimate it by leveraging Section~\ref{sec:partial-hetero} accordingly. To analyze the effect of $\delta$ in Section \ref{sec:constraint}, we use both $\delta = 0.001$ and $\delta = \sqrt{\epsilon_{\mathrm{mach}}}$, labeled HCP and HCP0, respectively. In Appendix~\ref{sec:appendix-simulation} we provide more details on the simulation, present results for fully heteroskedastic precision, and analyze the effect of rank misspecification. We measure performance with relative errors, and for the mean recovery we will compare our method against CP decomposition. For the precision, we are not aware of a directly comparable baseline.

Figure~\ref{fig:sim-results-low} provides our results as boxplots over 50 simulations. The mean recovery consistently outperforms CP decomposition, and there is almost no difference in the mean or precision recovery between HCP and HCP0, particularly for the larger tensor sizes. The sufficiency of a small constraint motivates our choice of $\delta = \sqrt{\epsilon_{\mathrm{mach}}}$ as the default. 

\begin{figure}[]
    \centering
    \begin{subfigure}[t]{0.49\linewidth}
        \centering
        \includegraphics[width=\linewidth]{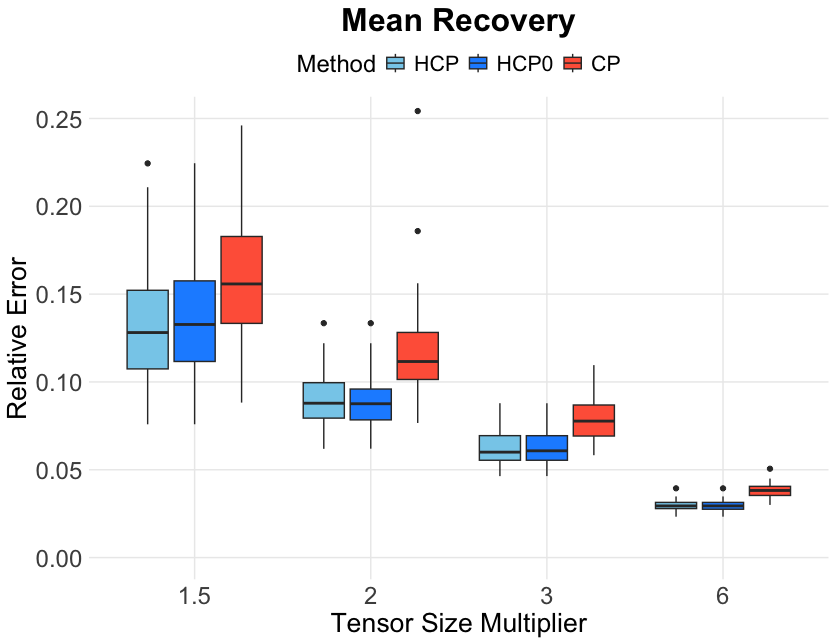}
        \caption{Mean recovery.}
        \label{fig:mean-sim-low}
    \end{subfigure}
    \hfill
    \begin{subfigure}[t]{0.49\linewidth}
        \centering
        \includegraphics[width=\linewidth]{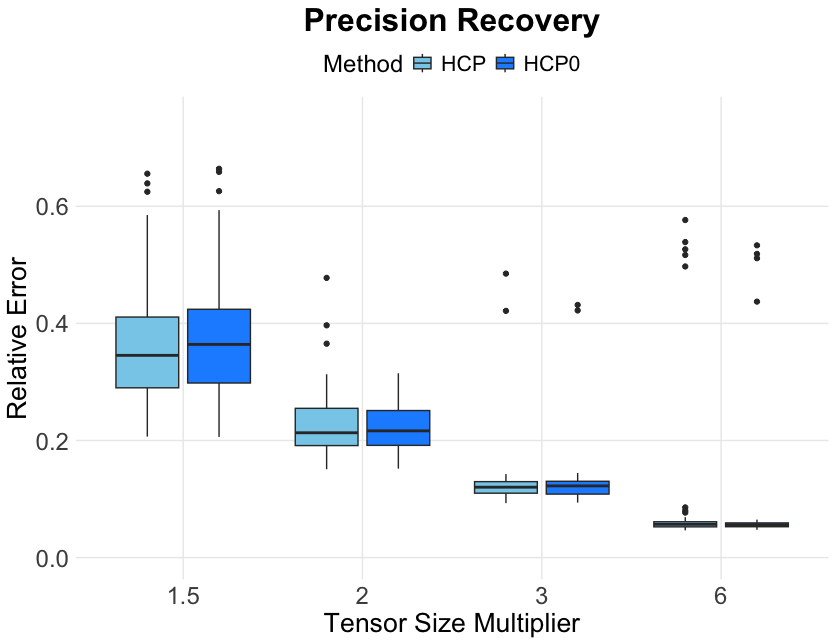}
        \caption{Precision recovery.}
        \label{fig:var-sim-low}
    \end{subfigure}
    \caption{Simulation results comparing recovery performance across tensor sizes. The left panel shows relative error in recovery of \(\mB\), while the right panel shows relative error in recovery of \(\mS\).}
    \label{fig:sim-results-low}
\end{figure}

Some takeaways from the rank misspecification results in Appendix~\ref{sec:appendix-simulation} are also worth noting, since in practical applications of HCP the true ranks will be unknown. First, it is generally more detrimental to underestimate the rank than it is to overestimate the rank, consistent with standard model selection discussions for CP \citep{core2003, koldaandbader09}. Secondly, we have key validation that for even poor choices of $R_{\mS}$, the mean estimate $\widehat \mB$ from HCP outperforms its standard CP counterpart.

\section{EEG Data Analysis} \label{sec:EEG}

In this section, we apply HCP to electroencephalogram (EEG) data, motivated by prior work using CP decompositions to recover interpretable factors from multi-channel EEG tensors \citep{cong2015tensor, Acar_2011_incomplete}.

For our EEG dataset, we  use the publicly available EEG Motor Movement/Imagery Dataset from PhysioNet \citep{Goldberger2000PhysioNet, PhysioNetEEGMMI}, collected using the BCI2000 system \citep{Schalk2004BCI2000}. This dataset contains 64-channel (i.e. electrode locations on the scalp) EEG recordings from subjects performing motor movement and imagery $2$-minute tasks. 
We consider $3$ ``runs'', with $15$ measurements each, of the first subject performing task $1$, where a target appears on the left or the right side of the screen, and the subject opens and closes their corresponding fist. Following  \cite{Acar_2011_incomplete} (with full details in Appendix~\ref{sec:appendix-EEG-preprocessing}) we computed Morlet wavelet time-frequency features for each channel and measurement over $4-40$ Hz in $2$ Hz increments, then took the magnitude of the complex coefficients and vectorized the frequency-by-time representation into a \(6099\)-dimensional feature vector. Hence, we obtain a tensor \(\mX \in \mathbb{R}^{64 \times 6099 \times 45}\), whose modes correspond to the \(64\) EEG channels, \(6099\) time-frequency coefficients, and \(45\) measurements.

We will analyze the low-rank recovery $\widehat{\mB}$ of $\mX$, computed using  \(R_{\mB}=3\), matching the rank used in \cite{Acar_2011_incomplete}. For HCP, we set \(R_{\mS}=1\), and assume $\mS$ is partially heteroskedastic, with constant variance along the measurement mode. Figure~\ref{fig:eeg-var} displays the estimated log-variance by time-frequency location, where we observe elevated average variance at lower frequencies. This matches a physical interpretation for EEG data and validates HCP as a viable alternative: low-frequency components are typically more variable due to the well-known \(1/f\)-like spectral structure of EEG \citep{Pritchard1992EEGFractal}, and they are also more susceptible to slow drift, motion artifacts, and ocular contamination \citep{Jung2000RemovingEEGArtifacts, Delorme2004EEGLAB}.

\begin{figure}[]
    \centering
    \includegraphics[width=0.6\textwidth]{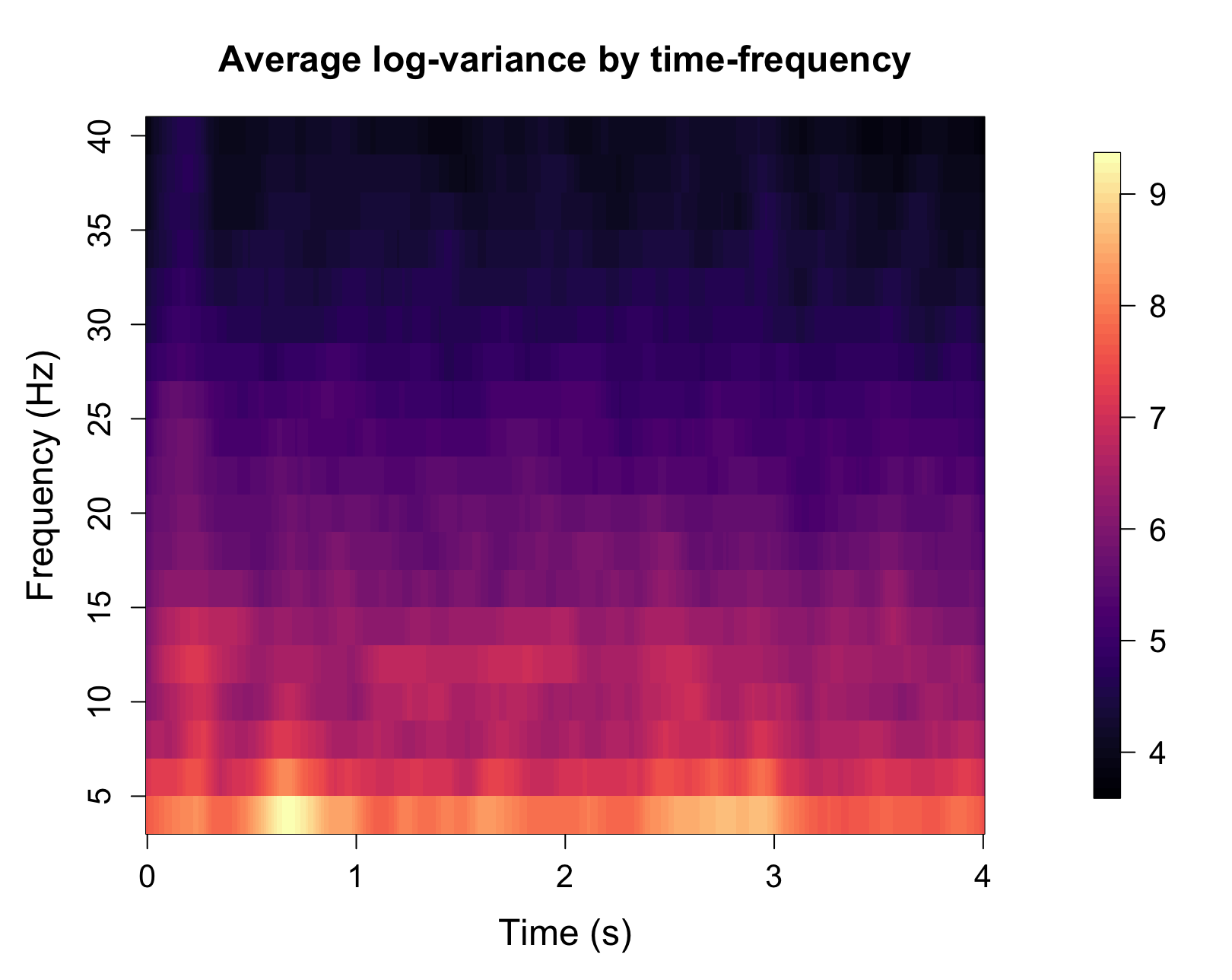}
    \caption{Estimated log-variance from HCP for the EEG tensor, averaged over channels and runs. The recovered variance is largest at lower frequencies.}
    \label{fig:eeg-var}
\end{figure}

The estimated mean factors for HCP and CP are presented in 
Figure~\ref{fig:eeg-brainfactors}. For both methods, the second component captures similar channel and time-frequency structure that is stable across measurements, and the first component appears to capture a low-frequency pattern that differs between measurement runs, though the measurement loadings are much more random in CP than in HCP. The third component displays the most notable difference between the methods: HCP identifies a localized channel pattern with loadings differing across runs, while CP appears to only capture an inverse of the low-frequency pattern in its first component. In Appendix~\ref{sec:appendix-EEG-preprocessing}, we examine a rank-$4$ fit for CP, where the additional component still only seems to model this low-frequency pattern. 

Coupled with our estimate of the variance structure in Figure~\ref{fig:eeg-var}, it seems likely that CP decomposition is treating the low-frequency variability as stable mean signal, while HCP, by modeling the heteroskedasticity separately, produces interpretable factors less dominated by this high-variance region. 

\begin{figure}[]
    \centering
    \includegraphics[width=\textwidth]{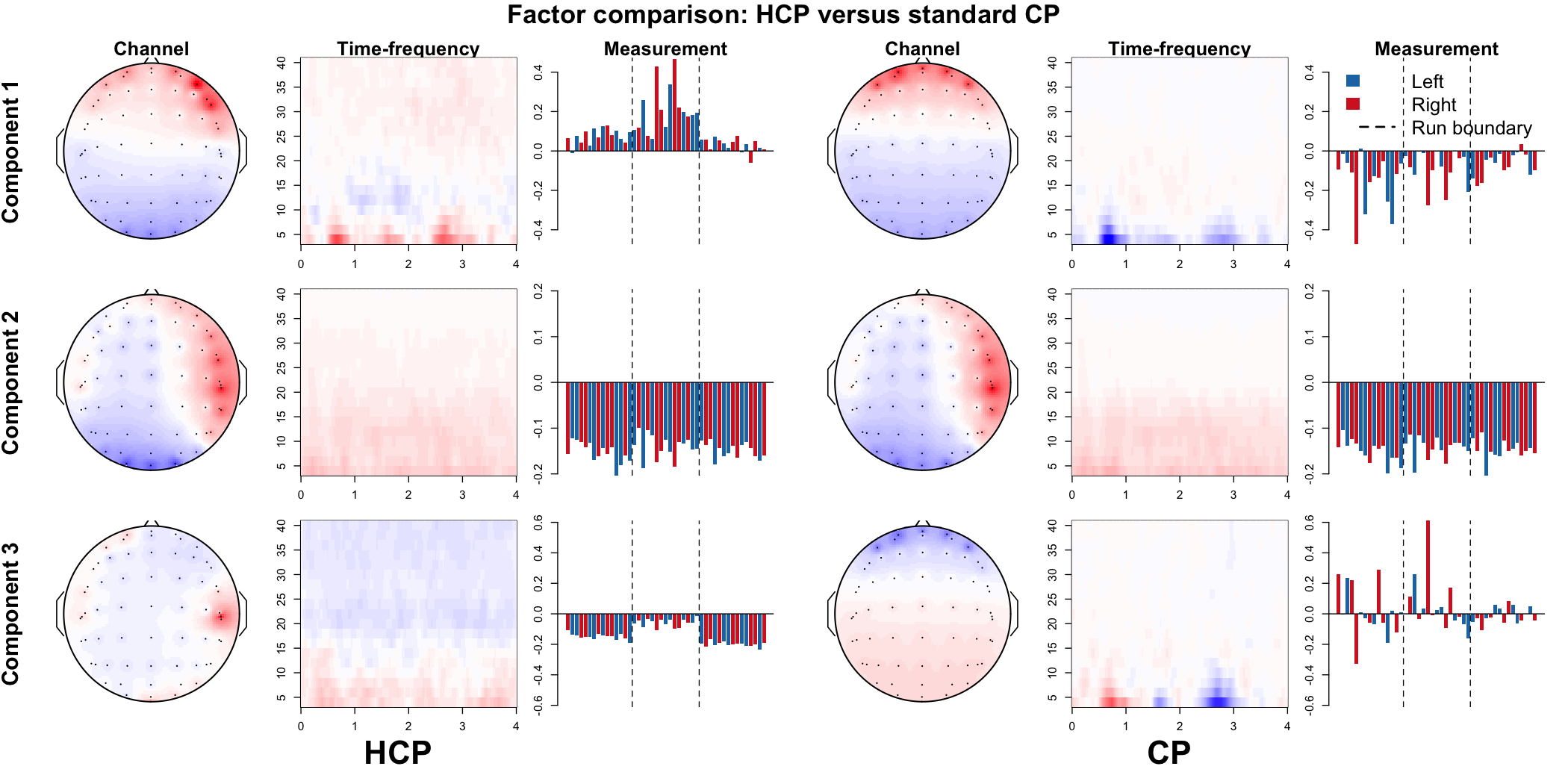}
    \caption{Visualization of the factor matrices recovered from the EEG tensor by CP and HCP with rows corresponding to components. Within each method, columns show the channel/scalp loadings, time-frequency loadings, and measurement loadings. Dashed vertical lines in measurement columns indicate recording-run boundaries.}
        \label{fig:eeg-brainfactors}
\end{figure}

\section{Discussion} \label{sec:conclusion}

In this paper, we introduced HCP, an extension of CP decomposition to an entrywise independent Gaussian model with unknown, low-rank precision tensor. We jointly estimate the mean and precision by alternating between the Weighted CP and Gamma-rate CP subproblems, deriving block-coordinate updates for each, and address the degeneracy of the objective with a residual modification that corresponds to a penalized likelihood approach.
Across synthetic experiments, we see that HCP is substantially more stable than CP under increasing heteroskedasticity, and we demonstrate the advantages of our ``two-parameter'' decomposition in handling model misspecification and providing interpretability benefits. 

Addressing the assumptions of our model provides many clear avenues for future work. For example, the entrywise independence could be replaced with a covariance from the tensor-multivariate-normal distribution \cite{MANCEUR201337tvn}, or the Gaussian generating distribution could be replaced with one allowing heavy tails, such as Student's t distribution, taking inspiration from \cite{auddy2021estimatingrankonespikedtensors}. Other two-parameter distributions may also be of interest. For example, the non-negative tensor factorizations developed by \cite{chiandkolda12} would likely not be adaptive to over-dispersed regimes, mirroring CP's degradation under heteroskedasticity. 
Finally, adaptations to other low-rank models for the parameters (e.g. Tucker, Tensor Train) could be considered.

\section*{Acknowledgements}

We thank Danny Dunlavy and Richard Lehoucq of Sandia National Labs, and Oscar Hernan Madrid Padilla of the University of California, Los Angeles for several helpful discussions during the writing of this manuscript. 

Sandia National Laboratories is a multi-mission laboratory managed and operated by National Technology \& Engineering Solutions of Sandia, LLC (NTESS), a wholly owned subsidiary of Honeywell International Inc., for the U.S. Department of Energy’s National Nuclear Security Administration (DOE/NNSA) under contract DE-NA0003525. This written work is authored by an employee of NTESS. The employee, not NTESS, owns the right, title, and interest in and to the written work and is responsible for its contents. Any subjective views or opinions that might be expressed in the written work do not necessarily represent the views of the U.S. Government. The publisher acknowledges that the U.S. Government retains a non-exclusive, paid-up, irrevocable, world-wide license to publish or reproduce the published form of this written work or allow others to do so, for U.S. Government purposes. The DOE will provide public access to results of federally sponsored research in accordance with the DOE Public Access Plan.

\newpage
\bibliographystyle{plainnat} 
\bibliography{references}

\newpage
\appendix

\section{Proof of Theorems~\ref{thm:GPCP} and~\ref{thm:penalty}} \label{sec:appendix-proofs}
In this section, we prove Theorems~\ref{thm:GPCP} and \ref{thm:penalty}. 

\paragraph{Theorem~\ref{thm:GPCP}}
Let $\bA, \bC \in \mathbb{R}_{\geq 0}^{m \times n}$, $\bB\in \mathbb{R}_{> 0}^{m \times p}$, and $\bOmega \in \mathbb{R}_{> 0}^{p \times n}$. Assume that each row of \(\bA\) has at least one positive entry and \(\bC\bOmega^\top\) is entrywise positive. Then the function 
$
f(\bB) = \bone_m^\top\left\{\bA*\log(\bB\bOmega) - \bC*(\bB\bOmega)
\right\}\bone_n
$
is non-decreasing under the multiplicative update
\begin{equation} 
    \bB\leftarrow \bB* \left\{ \left [ \bA\oslash (\bB\bOmega)\right]\bOmega^\top\right\} \oslash (\bC\bOmega^\top).
\end{equation}

\begin{proof}
First, write the function $f$ entrywise as
$$
f(\bB) = 
\sum_{ij} \left[ a_{ij}\log\left(\sum_r b_{ir}\omega_{rj}\right)-c_{ij}\sum_r b_{ir}\omega_{rj} \right], 
$$
and the proposed update rule as 
\begin{equation} \label{eq:entry-update}
    b_{ir} \gets \frac{\bar b_{ir}}{\sum_{j}\omega_{rj}c_{ij}} \left( \sum_j \omega_{rj}\frac{a_{ij}}{\sum_r \bar b_{ir}\omega_{rj}} \right).
\end{equation}

Now, let $\alpha_{rij} = \dfrac{\bar b_{ir}\omega_{rj}}{\sum_r \bar b_{ir}\omega_{rj}} $ and consider the function 
$$
g(\bB,\bar \bB) = 
\sum_{ijr}\left[
\alpha_{rij}a_{ij}\log\left(\dfrac{b_{ir}\omega_{rj}}{\alpha_{rij}}\right)- c_{ij}b_{ir}\omega_{rj}
\right].
$$

Note, first, that 
\begin{align*}
        g(\bB, \bB) &= \sum_{ijr}\left[
\alpha_{rij}a_{ij}\log\left(\dfrac{b_{ir}\omega_{rj}}{\alpha_{rij}}\right)- c_{ij}b_{ir}\omega_{rj}
\right] \\
&= \sum_{ij} \frac{a_{ij}}{\sum_{r} b_{ir}\omega_{rj}} \left[
\sum_{r} b_{ir}\omega_{rj}\log\left(\sum_{r}  b_{ir}\omega_{rj}\right)  \right]- \sum_{ij}c_{ij}\sum_{r} b_{ir}\omega_{rj} \\
&= \sum_{ij} a_{ij} \log\left(\sum_{r} b_{ir}\omega_{rj}\right) - \sum_{ij}c_{ij}\sum_{r} b_{ir}\omega_{rj} = f(\bB). 
\end{align*}

Under the assumptions \(\bar{\bB}>0\) and \(\bOmega>0\), we have \(\alpha_{rij}>0\) for all \(i,j,r\), and \(\sum_r \alpha_{rij}=1\). 
The concavity of $\log$ gives us  $\log\left(\sum_r b_{ir}\omega_{rj}\right) \geq \sum_r \alpha_{rij}\log\left(\dfrac{b_{ir}\omega_{rj}}{\alpha_{rij}}\right)$. So, $g(\bB,\bar \bB) \leq f(\bB)$. This, along with $g(\bB, \bB) = f(\bB)$, means $g(\bB,\bar \bB)$ minorizes $f(\bB)$. 

So, we consider the basic minorization-maximization iteration $\bB_{k+1} = \argmax_{\bB} g(\bB, \bB_k)$, which will be non-decreasing with respect to $f$ since $f(\bB_{k+1}) \geq g(\bB_{k+1}, \bB_k) \geq g(\bB_k, \bB_k) = f(\bB_k)$. The first derivative of $g(\bB,\bar \bB)$ is
$$
\dfrac{\partial g(\bB,\bar \bB)}{\partial b_{ir}} = 
b_{ir}^{-1}\sum_j(\alpha_{rij}a_{ij}) - \sum_j(\omega_{rj}c_{ij}) .
$$
Setting the above to zero leads to
$
b_{ir} = \sum_j(\alpha_{rij}a_{ij})/\sum_j(\omega_{rj}c_{ij}),
$
matching the entrywise version of our update rule (equation~\ref{eq:entry-update}). 

\end{proof}

\paragraph{Theorem~\ref{thm:penalty}.}
Consider the scalar $\delta > 0$. The log-likelihoods $\ell_{\delta}(\mS| \mX, \mB)$ (equation~\ref{eq:penalized-likelihood}) and $\ell(\mS| \mX, \mB)$ (equation~\ref{eq:likelihood_precision}) are related as 
$$\ell_{\delta}(\mS| \mX, \mB) =  l(\mS| \mX, \mB) - \frac{\delta}{2} \| \mS\|_{1}.$$

\begin{proof}
\begin{align*}
    \ell_{\delta}(\mS \mid\mX, \mB) &\coloneqq \dfrac{1}{2} \mathbf{1}_{N_d}^\top \left(\log(\widetilde\bS_d\bOmega_{d}) - ( \mR_{(d)} + \delta) * \widetilde\bS_d\bOmega_{d} \right) \mathbf{1}_{N_{-d}} \\
    &= \dfrac{1}{2} \mathbf{1}_{N_d}^\top  \left(  \log(\widetilde\bS_d\bOmega_{d}) - \mR_{(d)} * \widetilde\bS_d\bOmega_{d} -  \delta \widetilde\bS_d\bOmega_{d}  \right) \mathbf{1}_{N_{-d}} \\
    &= l(\mS \mid\mX, \mB) - \frac{\delta}{2} \sum_{i, j} (\widetilde\bS_d \bOmega_{d})_{i, j} = l(\mS| \mX, \mB) - \frac{\delta}{2} \| \mS\|_{1}.
\end{align*} 
The final equality holds since all elements of $\mS$ are positive by definition.
\end{proof}

\section{A Short Primer on Tensor Operations} \label{sec:appendix-background}

In this section, we  provide a brief review of tensor and matrix product definitions and properties we use in the main manuscript. See \cite{kolda06} for a more comprehensive review. We will use the $D$-way tensor $\mX \in \mathbb{R}^{N_1 \times \ldots \times N_D}$.

Fixing all indices of $\mX$ except one yields a \emph{fiber}, the higher-order analogue of a matrix row/column \cite{kolda06}. Fixing all indices except two yields a \emph{slice}, i.e., a two-dimensional section of the tensor. 
The \emph{mode-$d$ matricization} $\mX_{(d)}$ (see \cite{kolda06} for the general case), 
 arranges the mode-$d$ fibers as the columns of a matrix, so $\mX_{(d)} \in \mathbb{R}^{N_d \times N_{-d}}.$ 

The tensor $\mX$ is \emph{rank-one} if it can be written as the outer product of $D$ vectors,
$
\mX = \bb^{(1)} \circ \bb^{(2)} \circ \cdots \circ \bb^{(D)},
$
meaning $x_{i_1\cdots i_d} = \bb^{(1)}_{i_1}\cdots \bb^{(D)}_{i_d}$. Tensor rank, denoted  as $\text{rank}(\mX)$, is defined as the smallest integer $R$ such that $\mX$ can be written as a sum of $R$ rank-one tensors \cite{kolda06}. When referring to non-negative tensors such as $\mS$, we will often use the non-negative rank \citep{limandcomon09}, denoted $\text{rank}_{+}(\mS)$, and defined as the smallest integer $R$ such that $\mS$ can be written as a sum of $R$ rank-one tensors, where each rank-one tensor has non-negative factors. Note that it does not generally hold that $\text{rank}(\mS) = \text{rank}_{+}(\mS)$.

When a CP-rank decomposition is known, we can express the tensor in terms of its factor matrices using the \textit{Kruskal operator}. Specifically, 
\begin{equation}
    \mathcal{B} = \sum_{r=1}^{R} \lambda_r\mathbf{b}_r^{(1)} \circ \mathbf{b}_r^{(2)} \circ \cdots \circ \mathbf{b}_r^{(D)} \coloneqq [\![\blambda;  \bB_1, \bB_2, \ldots, \bB_D ]\!],
\end{equation}
where $\mathbf{b}_r^{(d)} \in \mathbb{R}^{N_d}$ are the factor column vectors for mode $d$, making up each rank-one component, and $\bB_d = [\mathbf{b}_1^{(d)}, \mathbf{b}_2^{(d)}, \ldots, \mathbf{b}_R^{(d)}] \in \mathbb{R}^{N_d \times R}$ is the factor matrix for mode $d$. 

As mentioned in Section \ref{sec:background}, Kruskal tensors have an inherent scaling ambiguity, hence our decision to write the scaling vector $\lambda$ explicitly, and use factor matrices with unit column norms. We use the following function in Algorithms~\ref{alg:heterocp-mean}, \ref{alg:heterocp-precision}, and \ref{alg:heterocp} to normalize a factor matrix and return the scaling vector. 

\begin{algorithm}[H]
\caption{\textsc{ColumnNormalize}}
\label{alg:column-normalize}
\begin{algorithmic}[1]
\State \textbf{Input:}  Matrix \(\bB \in \mathbb{R}^{N\times R}\), norm \(p\geq 1\)
\For{$r = 1,\ldots,R$}
    \State \(\lambda_{r} \gets \|\bB_{:,r}\|_p\)
    \State \(\bB_{:,r} \gets \bB_{:,r}/\lambda_{r}\)
\EndFor
\State \Return \((\bB,\blambda)\)
\end{algorithmic}
\end{algorithm}

The Kronecker product of matrices $\bA \in \mathbb{R}^{I\times J}$ and $\bB \in \mathbb{R}^{K\times L}$ is denoted by $\bA \otimes \bB $. The result is a matrix of size $(IK)\times (JL)$ and defined by
\[
\bA \otimes \bB =
\begin{bmatrix}
\bA_{11}\bB & \bA_{12}\bB & \cdots & \bA_{1J}\bB\\
\bA_{21}\bB & \bA_{22}\bB & \cdots & \bA_{2J}\bB\\
\vdots  & \vdots  & \ddots & \vdots\\
\bA_{I1}\bB & \bA_{I2}\bB & \cdots & \bA_{IJ}\bB
\end{bmatrix}.
\]

The Khatri--Rao product is the ``matching columnwise'' Kronecker product. Given matrices
$\bA \in \mathbb{R}^{I\times K}$ and $\bB \in \mathbb{R}^{J\times K}$, their Khatri--Rao product is denoted by $\bA \odot \bB$. The result is a matrix of size $(IJ)\times K$ defined by
\[
\bA \odot \bB =
\begin{bmatrix}
\ba_1 \otimes \bb_1 & \ba_2 \otimes \bb_2 & \cdots & \ba_K \otimes \bb_K
\end{bmatrix},
\]
where $\ba_k$ and $\bb_k$ denote the $k$th columns of $\bA$ and $\bB$, respectively. If $\ba$ and $\bb$ are vectors, then the Khatri--Rao and Kronecker products are identical, i.e., $\ba \otimes \bb = \ba \odot \bb$.

Finally, we leveraged the following propositions in manipulating the log-likelihood equations in Sections~\ref{sec:mean} and~\ref{sec:precision}. 
\begin{proposition}{(Proposition 5.3(b) in \cite{kolda06})} \label{prop:kruskal-decomp}  The mode-$d$ matricization of  Kruskal tensors:
\begin{equation*}
[\![ \blambda;\bX_1, \bX_2, \ldots, \bX_D ]\!]_{(d)} = \bX_d\diag(\blambda) \left( \bX_D \odot \cdots \odot \bX_{d+1} \odot \bX_{d-1} \odot \cdots \odot \bX_1 \right)^{\top}.
\end{equation*}
\end{proposition}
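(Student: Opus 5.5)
This is Proposition 5.3(b) of \cite{kolda06}, the mode-$d$ matricization of a Kruskal tensor. I would prove it by comparing entries on both sides, using only the definitions of the Kruskal operator, the Khatri--Rao product and the matricization ordering.

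\textbf{Step 1: the left side.} By definition, the entry of $\mX=[\![\blambda;\bX_1,\ldots,\bX_D]\!]$ at multi-index $\bj=(j_1,\ldots,j_D)$ is
$$\mX_{\bj}=\sum_{r=1}^R \lambda_r \prod_{c=1}^D (\bX_c)_{j_c r}.$$
Under the mode-$d$ matricization convention of \cite{kolda06}, $\mX_{\bj}$ is placed in row $j_d$ and column
$$k = 1+\sum_{c\neq d}(j_c-1)\prod_{m<c,\,m\neq d}N_m .$$
This is a column-major ordering of the remaining indices, with $j_1$ varying fastest among $c\neq d$.

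\textbf{Step 2: the right side.} Let $\bK=\bX_D\odot\cdots\odot\bX_{d+1}\odot\bX_{d-1}\odot\cdots\odot\bX_1$. Its $r$th column is the Kronecker product $\bx^{(D)}_r\otimes\cdots\otimes\bx^{(1)}_r$ (omitting mode $d$). In an iterated Kronecker product $\ba\otimes\bb$, the index of the rightmost factor varies fastest. So the entry of this column at row $k$, with $k$ as in Step 1, is $\prod_{c\neq d}(\bX_c)_{j_c r}$. This ordering check is the only real point of care, and I expect it to be the main obstacle: the reversed order $D,\ldots,1$ in the Khatri--Rao product is exactly what makes $j_1$ the fastest index, matching the matricization. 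I would verify it by a brief induction on the number of factors in the Kronecker product.

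\textbf{Step 3: compare entries.} The $(j_d,k)$ entry of $\bX_d\diag(\blambda)\bK^\top$ is
$$\sum_r (\bX_d)_{j_d r}\,\lambda_r\,\bK_{kr}=\sum_r\lambda_r\prod_{c}(\bX_c)_{j_c r}=\mX_{\bj}.$$
This equals the corresponding entry of $\mX_{(d)}$ for every $(j_d,k)$, which proves the identity.
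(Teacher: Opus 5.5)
Your entrywise verification is correct. With the first-index-fastest column index $k=1+\sum_{c\neq d}(j_c-1)\prod_{m<c,\,m\neq d}N_m$, row $k$ of the $r$th column of $\bX_D\odot\cdots\odot\bX_{d+1}\odot\bX_{d-1}\odot\cdots\odot\bX_1$ equals $\prod_{c\neq d}(\bX_c)_{j_c r}$, and the $(j_d,k)$ entry of the right-hand side then reduces to $\mX_{\bj}$.

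The paper itself contains no proof of this statement. It imports it as Proposition 5.3(b) of \cite{kolda06} and uses it only as bookkeeping in Equations \eqref{eq:likelihood-mean2} and \eqref{eq:likelihood_precision}. So your argument is a self-contained substitute for a citation, not a variant of an argument in the paper. The citation buys brevity. Your route makes explicit the one point the citation hides: the identity depends on the matricization convention. The decreasing order $D,\ldots,1$ of the Khatri--Rao factors is forced by the first-index-fastest ordering; a last-index-fastest ordering would instead require $\bX_1\odot\cdots\odot\bX_D$.

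This matters for reading the paper. The main text writes $\bPi_d=(\odot_{c\neq d}\bB_c)^\top$, and under the paper's stated convention $\odot_{d=1}^D\bA_d=\bA_1\odot\cdots\odot\bA_D$ that would be in increasing order. The proposition as stated, and your proof, fix the decreasing order that makes the identity exact under the Kolda matricization.

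If you want to shorten the ordering check, prove only the rank-one case $(\ba^{(1)}\circ\cdots\circ\ba^{(D)})_{(d)}=\ba^{(d)}(\ba^{(D)}\otimes\cdots\otimes\ba^{(d+1)}\otimes\ba^{(d-1)}\otimes\cdots\otimes\ba^{(1)})^\top$. Then sum over $r$ with weights $\lambda_r$, using linearity of matricization. This is the same computation, packaged differently.
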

\begin{proposition}{(Equation (520) in \cite{petersen2008})} \label{prop:vec-trick} $\vecc(\bA\bX\bB^\top) = (\bB\otimes \bA)\vecc(\bX)$. 
\end{proposition}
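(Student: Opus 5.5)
We prove Proposition~\ref{prop:vec-trick} by a direct block computation, using the column-stacking convention for $\vecc(\cdot)$. Let $\bA \in \mathbb{R}^{I\times J}$, $\bX \in \mathbb{R}^{J\times K}$ and $\bB \in \mathbb{R}^{L\times K}$, so that $\bA\bX\bB^\top \in \mathbb{R}^{I\times L}$ and $\bB\otimes\bA \in \mathbb{R}^{LI\times KJ}$. Write $\bx_k$ for the $k$th column of $\bX$, so that $\vecc(\bX)$ stacks $\bx_1,\dots,\bx_K$. The plan is to compare both sides one length-$I$ block at a time: block $l$ of $\vecc(\bA\bX\bB^\top)$ is the $l$th column of $\bA\bX\bB^\top$.

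\textbf{Left-hand side.} The $l$th column of $\bA\bX\bB^\top$ is $\bA\bX(\bB^\top)_{:,l}$. The $l$th column of $\bB^\top$ is the $l$th row of $\bB$, written as a column vector. Expanding the product $\bX$ times this vector as a combination of the columns of $\bX$ gives
\[
\bA\bX(\bB^\top)_{:,l} = \sum_{k=1}^K \bB_{lk}\,\bA\bx_k .
\]

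\textbf{Right-hand side.} By the block definition of the Kronecker product recalled above, $\bB\otimes\bA$ consists of $L\times K$ blocks, each of size $I\times J$, with the $(l,k)$ block equal to $\bB_{lk}\bA$. Multiplying by $\vecc(\bX)$ block by block, the $l$th block of the result is $\sum_{k} \bB_{lk}\bA\bx_k$.

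This coincides with the expression for the left-hand side for every $l\in[L]$, which proves the identity.

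The only point needing care is the indexing convention. We must use column-major vectorization, consistent with the matricization of \cite{kolda06}, and the block ordering of $\otimes$ must match the stacking order of $\vecc$. Once these conventions are fixed, no step presents a real obstacle.
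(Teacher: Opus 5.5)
Your proof is correct. With column-major $\vecc$ and the appendix's block definition of $\otimes$, the $(l,k)$ block of $\bB\otimes\bA$ is $\bB_{lk}\bA$, and block $l$ of each side equals $\sum_{k}\bB_{lk}\bA\bx_k$.

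The paper gives no derivation of this identity. It only imports it from Equation (520) of \cite{petersen2008}, so your argument replaces a citation rather than competing with a different proof. The citation buys brevity. Your block computation buys explicitness about two conventions the identity depends on, which you rightly flag. The first is column stacking in $\vecc$, which the paper never states. The second is that this stacking order must match the block layout of the Kronecker product, which the paper defines only in its appendix.

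Your block viewpoint also explains how the paper later uses the identity. Take $\bA=\bPi_d^\top$, $\bX=\widetilde\bB_d^\top$ and $\bB=\bI_{N_d}$. Then the only nonzero blocks of $\bI_{N_d}\otimes\bPi_d^\top$ lie on the diagonal. This is why the Gram matrix $\bG_d$ in Equation~\eqref{eq:mean-update} is block diagonal, with blocks $\bPi_d\diag((\mS_{(d)})_{i_d,:})\bPi_d^\top$.
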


\section{Heteroskedastic PCA} \label{sec:appendix-HPCA}

As mentioned in the main manuscript, heteroskedasticity has been much more extensively studied for matrix principal component analysis (PCA) \citep{hong2022optimallyweightedpcahighdimensional, Hong_PPCAT_2021, zhang2021heteroskedasticpcaalgorithmoptimality} or for (low-rank) matrix estimation / matrix denoising \citep{Agterberg_2022, cai2020subspaceestimationunbalancedincomplete, Leeb_2021, zhang2024matrixdenoisingdoublyheteroscedastic, su2024datadrivenoptimalshrinkagesingular, behne2022fundamentallimitsrankonematrix}. 

Modern heteroskedastic PCA formulations leverage what we call ``partial heteroskedasticity'' (defined in Section~\ref{sec:partial-hetero}), where only some directions have heteroskedastic data. \cite{zhang2021heteroskedasticpcaalgorithmoptimality, zhouDeflated10.1214/24-AOS2456} model row-wise heteroskedasticity, which they call ``within-sample'' heteroskedasticity, since a sample is a column in the observation matrix. In a contrasting regime, 
\cite{Hong_PPCAT_2021, Hong_PCA_2016, HONG_PCA_2018435, hong2022optimallyweightedpcahighdimensional}, model heteroskedasticity in PCA ``between-samples'' or column-wise, where there are distinct noise groups and group memberships known a priori. Among these works, \cite{Hong_PPCAT_2021} is closest to our setting in that estimating noise variances, rather than only obtaining robustness to heteroskedasticity, is a primary goal. Although its likelihood has an analogous degeneracy, the Gaussian prior on the low-dimensional coefficients regularizes the low-rank fit, so the variance update is not based on residuals from a freely optimized deterministic reconstruction as in HCP. The degeneracy only occurs for their optimization problem when a sample lies exactly in the learned low-dimensional subspace, which, as in the rank-one case for HCP discussed in Section~\ref{sec:rank-one}, is unlikely for noisy data.

For matrix estimation/denoising, row-wise heteroskedasticity is popular \citep{Agterberg_2022, cai2020subspaceestimationunbalancedincomplete, su2024datadrivenoptimalshrinkagesingular, Leeb_2021}, though \cite{behne2022fundamentallimitsrankonematrix} considers block-dependent signal strengths across pre-specified groups, for blocks that can be both row and column dependent, and \cite{zhang2024matrixdenoisingdoublyheteroscedastic} considers a more general structured-noise model of the form \(\bW=\Xi^{1/2}\bZ\Sigma^{1/2}\), which allows heterogeneity and correlations along both matrix dimensions. 

This final variance structure can be viewed 
as the order-\(2\) (matrix) specialization of the Kronecker-separable 
covariance structure used for Tensor-on-Tensor Regression 
\cite{llosaandmaitra22}, and corresponds to a special case of HCP 
(Model~\ref{eq:generative-model}) when the precision tensor \(\mS\) is 
rank-one. In particular, if the matrix noise model is written as 
\(\bW=\Xi^{1/2}\bZ\Sigma^{1/2}\), with 
\(\bZ_{ij}\stackrel{\text{iid}}{\sim}\mathcal{N}(0,1)\), 
\(\Xi=\operatorname{diag}(\xi_1,\ldots,\xi_d)\), and 
\(\Sigma=\operatorname{diag}(\sigma_1,\ldots,\sigma_d)\), then 
\(\bW_{ij}\sim \mathcal{N}(0,\bS_{ij}^{-1})\), for $\bS_{ij} = \xi_i^{-1}\sigma_j^{-1}$. This comparison primarily highlights the divergence, though, of Model~\ref{eq:generative-model} from the heteroskedastic matrix estimation literature: in HCP, the noise is generated entrywise independently, and the structured variability is encoded through the low-rank form of the precision tensor \(\mS\), rather than through an explicit covariance tensor. As mentioned in Section~\ref{sec:conclusion}, a heteroskedastic CP decomposition that parametrizes the variance structure with this Kronecker-separable covariance to allow for explicit dependence presents an interesting avenue for future work.

\section{Full Procedure for Heteroskedastic CP Decomposition} \label{sec:appendix-heterocp}

In this section we present the full procedure for Heteroskedastic CP Decomposition, without relying on calls to the Weighted CP or Gamma-rate CP Algorithms. We include the details for the $R_{\mS} = 1$ case,  partial heteroskedasticity, and the constrained residuals, discussed in Sections~\ref{sec:rank-one}, \ref{sec:partial-hetero}, and \ref{sec:constraint}, respectively. 

As in Section~\ref{sec:partial-hetero}, we assume homoskedastic modes will always be the final $Q$ modes. To accommodate other choices, one should simply permute the tensor $\mX$ before running Algorithm~\ref{alg:heterocp-full}, and reorder the returned factor matrices after Algorithm~\ref{alg:heterocp-full} concludes. 

\begin{algorithm}[H]
\caption{Heteroskedastic CP Decomposition}
\label{alg:heterocp-full}
\begin{algorithmic}[1]
\State \textbf{Input:} Data \( \mX \in \mathbb{R}^{N_1 \times \cdots \times N_D} \)
\State \textbf{Input:} Ranks \(R_{\mB}\), \(R_{\mS}\)
\State \textbf{Input:} Number of Heteroskedastic modes \(P\leq D\)
\State \textbf{Input:} Residual constraint \(\delta\geq 0\) \Comment{Default value is \(\delta = \sqrt{\epsilon_{\mathrm{mach}}} \)}
\State \textbf{Input:} Maximum iterations \(i_{\mathrm{tot}}, i_{\mB}, i_{\mS}, i_{\mathrm{in}}\)
\State \textbf{Initialize:} 
\(\{\bB_1,\ldots,\bB_D\}\), with unit column \(\ell_2\) norms 
\State \textbf{Initialize:} 
\(\{\blambda_{\mS},\bS_1,\ldots,\bS_P\}\), entrywise positive, with unit column \(\ell_1\) norms for $\bS_d$
\State Define $\bS_q := \bone_{N_q}\bone_{R_{\mS}}^\top$ for $q = P + 1, \ldots D$ \Comment{``Factor Matrices'' for the Homoskedastic Modes} 

\While{not converged and iterations \(\leq i_{\mathrm{tot}}\)}
    
    \State $\mZ = \mX*[\![ \blambda_{\mS}; \bS_1,\ldots,\bS_D]\!]$  \Comment{Weighted CP}  
    \For{\(t_{\mB}=1,\ldots,i_{\mB}\)} 
        \State $\bH_{r, d} \gets \left(\bB_d^\top
                \operatorname{diag}\!\left((\bS_d)_{:,r}\right)
                \bB_d\right)$ for \(r=1,\ldots,R_{\mS}\), \(d=1,\ldots,D\)

        \For{\(d = 1,\ldots,D\)}
       
            \State $\bZ_d
                \gets
                {\mZ}_{(d)} \left(\odot_{c\neq d}\bB_c\right)     $            \Comment{MTTKRP}  
            
            \State $\bH_{r, -d} \gets \left( \underset{c \neq d}{*} \bH_{r, c} \right)$  
            \For{\(i_d = 1,\ldots,N_d\)}

                \State Define $\bG
                    \gets
                    \sum_{r=1}^{R_{\mS}}
                    (\blambda_{\mS})_r
                    (\bS_d)_{i_d,r} \bH_{r, -d} $ \Comment{Factored Weights Optimization}
                \State \((\bB_d)_{i_d,:} \gets \bb^\top\), where \(\bb\) solves $\bG\bb = (\bZ_d)_{i_d,:}^\top $
                
            \EndFor
    
            \State \((\bB_d,\blambda_{\mB})\gets
            \Call{\textsc{ColumnNormalize}}{\bB_d,2}\)

            \State $\bH_{r, d} \gets \left(\bB_d^\top
                \operatorname{diag}\!\left((\bS_d)_{:,r}\right)
                \bB_d\right)$ for \(r=1,\ldots,R_{\mS}\)
    
        \EndFor
    \EndFor
    \State $\mR
        \gets
        \left(
            \mX-[\![\blambda_{\mB};\bB_1,\ldots,\bB_D]\!]
        \right)^{*2}$
    \If{\(P<D\)}
        \State $(\mR)_{\bi}
            \gets
            \frac{1}{\prod_{q=P+1}^{D}N_q}
            \sum_{\bj\in [N_{P+1}]\times\cdots\times[N_D]}
            (\mR)_{(\bi,\bj)}$ for \(\bi\in [N_1]\times\cdots\times[N_P]\)
    \EndIf   
    \State $\mR \gets \mR + \delta$
    
    \If{\(P=1\)} \Comment{Special Case: One Heteroskedastic Mode}
        \State \((\bS_1,\blambda_{\mS})\gets
        \Call{\textsc{ColumnNormalize}}{\mR^{*-1},1}\)
    
    \Else \Comment{Gamma-rate CP}
        \For{\(t_{\mS}=1,\ldots,i_{\mS}\)}
            \For{\(d=1,\ldots,P\)}
                \If{\(R_{\mS}=1\)} 
                    \State $\bOmega_d
                    \gets
                    \left(
                        \otimes_{c\leq P,\;c\neq d}\bS_c
                    \right)
                    $ \Comment{Note $\bS_c$ are vectors when $R_{\mS} = 1$}
                    \State $\widetilde \bS_d \gets \Big( \prod_{c \leq P, c \neq d} N_c\Big)
                    \left(
                        (\mR)_{(d)}\bOmega_d
                    \right)^{*-1}$ \Comment{Closed Form Update for $R_{\mS} = 1$}
                \Else
                    \State $\bOmega_d
                    \gets
                    \left(
                        \odot_{c\leq P,\;c\neq d}\bS_c
                    \right)^\top
                    $ 
                    \State $\bD_d
                        \gets
                        (\mR)_{(d)}\bOmega_d^\top $ \Comment{MTTKRP}  
                    \State $\widetilde{\bS}_d
                        \gets
                        \bS_d\operatorname{diag}(\blambda_{\mS})$
                    \For{\(i=1,\ldots,i_{\mathrm{in}}\)}
                        \State $\widetilde{\bS}_d
                            \gets
                            \widetilde{\bS}_d
                            *
                            \left(
                                (\widetilde{\bS}_d\bOmega_d)^{*-1}
                                \bOmega_d^\top
                            \right)
                            \oslash
                            \bD_d $
                    \EndFor
                \EndIf
                \State $(\bS_d,\blambda_{\mS})
                        \gets
                        \Call{\textsc{ColumnNormalize}}{\widetilde{\bS}_d,1}$
            \EndFor
        \EndFor
    \EndIf

\EndWhile

\State \textbf{Output:} 
\(\{\blambda_{\mB},\bB_1,\ldots,\bB_D\}\) and
\(\{\blambda_{\mS},\bS_1,\ldots,\bS_P\}\)

\end{algorithmic}
\end{algorithm}

\section{Computational Complexity} \label{sec:appendix-complexity}

In this section, we provide the derivation of the time complexity of Algorithm~\ref{alg:heterocp} given in Equation~\ref{eq:complexity}, and provide the space complexity as well.  Lower-order costs associated with column normalization, storing scaling vectors, and convergence checks are omitted. Recall $N_{\times} \coloneqq  \prod_{d=1}^D N_d, \, N_+ \coloneqq  \sum_{d=1}^D N_d,$ and $N_{-d} \coloneqq  \prod_{c\neq d} N_c = N_{\times}/N_d.
$

First, we derive the cost for Algorithm~\ref{alg:heterocp-mean}. For mode \(d\), the update is performed row-wise, and for each row \(k=1,\ldots,N_d\), the update solving the weighted least-squares problem  (Equation \eqref{eq:mean-update}) by a normal-equation, QR, SVD, or pseudoinverse-based calculation requires, to leading order, $O(N_{-d}R_{\mB}^2 + R_{\mB}^3)$ 
operations for each row. Since there are \(N_d\) rows in mode \(d\), the cost for updating \(\bB_d\) is
\[
O\left(N_dN_{-d}R_{\mB}^2 + N_dR_{\mB}^3\right)
=
O\left(N_{\times}R_{\mB}^2 + N_dR_{\mB}^3\right).
\]
Summing over \(d=1,\ldots,D\), one full sweep of Algorithm~\ref{alg:heterocp-mean} costs $O\left(
DN_{\times}R_{\mB}^2
+
N_+R_{\mB}^3
\right).$
If $R_{\mB} \ll N_{-d} ~ \forall ~ d \in [D]$, typical for CP decomposition, this reduces to $O(DN_{\times}R_{\mB}^2)$, the cost reported in Section \ref{sec:mean}. 

For reference, a standard dense CP-ALS sweep with rank \(R\) has cost $O(DN_{\times}R + N_+R^2 + DR^3)$, though we only reported \(O(DN_{\times}R)\) in Section \ref{sec:mean}, since the MTTKRP (Matricized Tensor times Khatri--Rao Product) cost will dominate the Gram and solve terms whenever
\[
    R \ll \frac{DN_{\times}}{N_+}
    \quad\text{and}\quad
    R^2 \ll N_{\times}, 
\]
which is almost always the case for CP decomposition. The computational advantage of CP-ALS comes from the Khatri--Rao Gram 
identity
\[
    \bPi_{d}\bPi_{d}^\top
    =
    \underset{c\neq d}{*}
    (\bB_c^\top\bB_c),
\]
which reduces the Gram calculation from the large 
\(N_{-d}\times R\) Khatri--Rao matrix \((\bPi_{d})^\top\) to an 
\(R\times R\) matrix. In contrast, the Gram matrix blocks $(\bG_{d, i_d}) = \bPi_{d}\diag((\mS_{(d)})_{i_d, :})\bPi_{d}^\top$
used for Algorithm \ref{alg:heterocp-mean} depend on $i_d \in [N_d]$, and do not form one shared  \(R_{\mB}\times R_{\mB}\) system for the entire mode update.

However, as pointed out in Section \ref{sec:mean}, when the weights \(\mS\) are low-rank, as assumed in this work, the HCP Gram matrices can be assembled more efficiently using Equation~\eqref{eq:fast-gram}. Now, the row wise Gram blocks can be assembled as  
\[
    \bG_{d,i_d}
    =
    \sum_{r=1}^{R_{\mS}}
    (\blambda_{\mS})_r
    (\bS_d)_{i_d,r}
    \left(\underset{c\neq d}{*} \bH_{r,c} \right),
\]
where 
$$
\bH_{r, c} = 
        \bB_c^\top    \diag\big((\bS_c)_{:,r}\big)    \bB_c
     \in \mathbb{R}^{R_{\mB}\times R_{\mB}}, \qquad r \in [R_{\mS}], \qquad c \in [D]
$$
costs only \(O(N_+R_{\mB}^2R_{\mS})\), rather than 
\(O(DN_{\times}R_{\mB}^2)\). Computing the Hadamard contractions costs \(O(D^2R_{\mS}R_{\mB}^2)\), though this can be reduced to \(O(DR_{\mS}R_{\mB}^2)\) by using cached prefix and suffix products. In either case, the term is lower order.


Thus, whenever
\begin{equation} \label{eq:small-rank}
    R_{\mB}R_{\mS} \ll \frac{DN_{\times}}{N_+}
    \quad\text{and}\quad
    R_{\mB}^2 \ll \frac{DN_{\times}}{N_+},  
\end{equation}
the \(O(DN_{\times}R_{\mB})\) term in the normal equations (Equation~\eqref{eq:mean-update}) is dominant. 
Under these rank conditions, Algorithm~\ref{alg:heterocp-mean} has the same leading dense-data order as a rank-\(R_{\mB}\) CP-ALS sweep. 

Next, let us consider Algorithm~\ref{alg:heterocp-precision}. The dense matrix products involving \(\bOmega_{d}\) each cost $O(N_dN_{-d}R_{\mS})
=
O(N_{\times}R_{\mS}).$ Thus, a full sweep over all modes, with \(i_{\text{in}}\) inner multiplicative updates, costs $O(Di_{\text{in}}N_{\times}R_{\mS}).$ 

For partially heteroskedastic precision, define $N_{\times, P} \coloneqq \prod_{d=1}^P N_d$ and $N_{+, P} \coloneqq  \sum_{d=1}^{P}N_d$ where $P < D$ is the number of heteroskedastic modes (like in Section~\ref{sec:partial-hetero}, we assume the tensor has been permuted such that heteroskedastic modes are the first $P$ modes). The leading cost of a precision sweep then becomes $O\left(
i_{\text{in}}P N_{\times, P} R_{\mS}
\right)$, which can be a substantial reduction from $O\left(i_{\text{in}}
D N_{\times} R_{\mS}
\right)$ when \(N_{\times, P}\ll N_{\times}\).

In addition, many practical applications (such as our EEG data analysis in Section~\ref{sec:EEG}) will only require $R_{\mS} = 1$. As we observed in Section~\ref{sec:rank-one}, in this rank-one case the factor update in Algorithm~\ref{alg:heterocp-precision} has a closed form, so we don't need any inner multiplicative iterations. This reduces the complexity from $O(Di_{\text{in}}N_{\times}R_{\mS})$ to $O(DN_{\times}R_{\mS})$. 

Combining the mean and precision updates, a single sweep for each subroutine of Algorithm~\ref{alg:heterocp} has computational complexity
\[
O\left(
D N_{\times} R_{\mB}
+
i_{\text{in}} D N_{\times} R_{\mS}
\right),
\]
where we are assuming the ranks satisfy Equation \eqref{eq:small-rank}. For the full alternating procedure, run for \(i_{\text{tot}}\) outer iterations, each with \(i_{\mB}\) sweeps in Algorithm~\ref{alg:heterocp-mean} and \(i_{\mS}\) sweeps in Algorithm~\ref{alg:heterocp-precision}, the total time complexity is
\[
O\left(
i_{\text{tot}}
\left[
i_{\mB}
D N_{\times} R_{\mB}
+
i_{\mS} i_{\text{in}} D N_{\times} R_{\mS}
\right]
\right).
\]

Since \(i_{\text{in}}\) is typically small, \(R_{\mS}\) is often smaller than \(R_{\mB}\), and partial heteroskedasticity can further reduce the
precision-update cost, we say that a single sweep for Algorithm~\ref{alg:heterocp} has roughly the same cost as a sweep for CP-ALS in the small-rank
regimes satisfying Equation~\eqref{eq:small-rank} targeted here. This makes our method highly competitive from a time complexity standpoint. 

Finally, consider space complexity. Storing the dense data/parameter tensors and the factor matrices requires
\[
O\left(
N_{\times}
+
N_{+}R_{\mB}
+
N_{+}R_{\mS}
\right).  
\]
In addition, the $\bH_{r, c}$ terms (or the prefix and suffix cached terms) for the Hadamard contractions require an additional workspace $O(DR_{\mS}R_{\mB}^{2})$. 

If the Khatri--Rao matrices \(\bPi_d\) and \(\bOmega_d\) appearing in Algorithms~\ref{alg:heterocp-mean} and~\ref{alg:heterocp-precision} are explicitly materialized, then the additional workspace is
\[
O\left(
\max_{d} N_{-d}(R_{\mB}+R_{\mS})
\right).
\]
If the Khatri--Rao contractions are implemented in a streamed or batched manner, the term \(\max_d N_{-d}(R_{\mB}+R_{\mS})\) can be reduced to the corresponding batch workspace. 
Our implementation does not leverage this streaming optimization, so the space bound for the implementation used in this paper is
\[
O\left(
N_{\times}
+
N_{+}R_{\mB}
+
N_{+}R_{\mS}
+
DR_{\mS}R_{\mB}^{2}
+
\max_{d}N_{-d}(R_{\mB}+R_{\mS})
\right).
\]

\section{Model Selection} \label{sec:model-selection}

In this section we give our recommendations for how to choose the user specified parameters in Algorithms~\ref{alg:heterocp-mean},~\ref{alg:heterocp-precision}, and~\ref{alg:heterocp}, namely the ranks, convergence criteria, maximum number of iterations, and initialization mechanism. We also discuss choosing homoskedastic modes and the size of the constraint $\delta$. 

In the examples in the main manuscript, we have treated the ranks $R_{\mB}$ and $R_{\mS}$ as true and known, or tailored the ranks to the application. In Appendix~\ref{sec:appendix-simulation} we demonstrate that HCP's performance will typically be strong as long as the ranks are large enough, and that misspecification of the precision rank will not drag HCP's performance below that of standard CP decomposition. However, a data-driven mechanism for model selection is still desirable. 

The simplest evaluation metrics would be to simply use RSS $\|\mX - \widehat{\mB} \|_{F}^2$ or Relative Error $\frac{\|\mX - \widehat{\mB} \|_{F}}{\|\mX\|_{F}}$, or their $\widehat{\mS}$ weighted counterparts. However, these typically decrease with ranks $R_{\mB}$ and $R_{\mS}$, so they are not great measures of ``goodness-of-fit'' for model selection. In their textbook on tensor decompositions, \cite{Ballard_Kolda_2025} advocate for an elbow-style approach of selecting smallest $R$ for which $R+1$ only offers minor improvement in RSS or relative error, which we can adapt to HCP by comparing the present (weighted, if desired) relative error of $R_\text{mean}, R_\text{var}$ to both $R_\text{mean}+1, R_\text{var}$ and $R_\text{mean}, R_\text{var}+1$. CORCONDIA \cite{core2003} or NORMO \cite{FERNANDES2020103926} propose ``appropriateness'' measures for goodness of fit, which rely on structural interpretations of the factor matrices. 

Since we take a likelihood-based approach, we can leverage Akaike information criterion (AIC) or Bayesian information criterion (BIC) to select the model. A BIC-style heuristic, for example, would simply be 
\begin{equation} \label{eq:BIC}
    -2\ell(\mB, \mS \mid\mX) + \log(N_{\times})\left[(R_{\mB}+R_{\mS})N_+ - (R_{\mB}+R_{\mS})(D-1) \right],
\end{equation}
where parameter counts take away $D-1$ degrees of freedom per component due to the scaling non-identifiability discussed in Section~\ref{sec:background} (note that we are assuming fully heteroskedastic precision, here). Many other likelihood-based tensor methods leverage AIC or BIC similarly, see \cite{Wang2023} for AIC or \cite{zhang2014tensorgeneralizedestimatingequations} and \cite{KetensorQR2003} for BIC. Technically, Equation~\eqref{eq:BIC} is only BIC when we are using the exact maximum likelihood in Equation~\ref{eq:joint-likelihood}, so, since in practice we optimize the penalized likelihood $\ell_\delta$, we only call this a BIC-style heuristic. 

We can also adapt a type of cross validation by masking a proportion of our entries, and solving for the low-rank estimate as a tensor completion problem. This is not typically used for standard CP decomposition, since the tensor completion problem is solved differently than the original decomposition (often with imputation (\cite{tomasi2005}), or with weights (\cite{Acar_2011_incomplete}) like we do in Section~\ref{sec:mean}), but likelihood-based methods lend themselves to tensor completion immediately \citep{Hong_GCP_2020, hood2026nearuniversalmultiplicativeupdatesnonnegative}. Suppose that we observe $\mX$ partially, and consider an index tensor  $\mT$ such that $\mT_{\bi}=1$ if the entry $\mX_{\bi}$ is observed, and zero otherwise. Our new loss function is:

$$
\ell(\mB,\mS) = 
-\dfrac{1}{2} \sum_{\bj}\mT_{\bj}\left( - \log(\mS_{\bj}) +  \mS_{\bj}(\mX_{\bj} -\mB_{\bj})^2  \right), 
$$
which can be solved with nearly the same alternating procedure we use for the complete data (we provide details for the tensor completion implementation in Appendix~\ref{sec:appendix-TC}). So, we can assess the fit of our low-rank model by analyzing relative errors or log-likelihoods in the completion problem as we mask different entries in our data. 

The convergence criterion, initialization, and maximum iterations are also user specified parameters for Algorithms \ref{alg:heterocp-mean},~\ref{alg:heterocp-precision}, and~\ref{alg:heterocp}. Since CP decompositions are not unique, we recommend using stationarity of the recovered tensor or of the log-likelihood, rather than of the factors, as convergence criterion. That being said, the algorithm will typically converge quickly, so for all experiments in this paper we simply use a small number of iterations for $i_{\mB}, i_{\mS},$ and $i_{\text{tot}}$. For 
$i_{\text{in}}$, one should use an especially small number of iterations because it is more beneficial to move on to the next factor with an improved estimate than it is to solve for the current factor perfectly. In our implementation we used $i_{\mB} = 5, ~ i_{\mS}= 5, ~ i_{\text{in}}= 5, $ and $i_{\text{tot}} = 20$. 

For initialization, we mentioned in Section~\ref{sec:heterocp} that we use standard CP decomposition to initialize the mean factors, and then run Algorithm~\ref{alg:heterocp-precision} with a larger number of iterations to initialize the precision factors (we used $50$). For this first run of Algorithm~\ref{alg:heterocp-precision}, we initialize the entries of each factor matrix randomly from $\operatorname{Gamma}(1, 1)$. 

Since ALS methods can occasionally get stuck in poor local optima, particularly in settings with missing data, we recommend using multiple seeds and selecting the result with the largest final log-likelihood. For our EEG experiment in Section \ref{sec:EEG}, we used the best of $5$ initializations for both HCP and CP. 

Finally, we note that the choice of homoskedastic modes or residual constraints in Sections~\ref{sec:partial-hetero} and~\ref{sec:constraint} are not viewed as parameters that need to be selected in practice the same way as the ranks. We would expect directions of homoskedastic variance to be chosen by the context or subject matter expertise prior to rank selection. If there is no contextual reason to consider partially heteroskedastic precision, we recommend simply defaulting to full heteroskedasticity, rather than trying to perform model selection over choices of homoskedastic modes and ranks simultaneously. For the lower bound on the residuals $\delta$, we implement as the default value $\delta = \sqrt{\epsilon_{\mathrm{mach}}}$, and only recommend increasing this if the degeneracy occurs. This is very easy to spot in practice and can be done simultaneously with rank selection; if a single entry of $\mS$ is many orders of magnitude larger than the others, then $\delta$ can be increased a few orders of magnitude. Like we see in Sections~\ref{sec:simulation} and \ref{sec:appendix-simulation}, the size of $\delta$ typically has little impact on the performance of HCP.

\section{Tensor completion} \label{sec:appendix-TC}
Suppose that we observe $\mX$ partially, and consider an index tensor  $\mT$ such that $\mT_{\bi}=1$ if the entry $\mX_{\bi}$ is observed, and zero otherwise. Our new loss function is:

$$
\ell(\mB,\mS \mid \mT) = 
-\dfrac{1}{2} \sum_{\bj}\mT_{\bj}\left( - \log(\mS_{\bj}) +  \mS_{\bj}(\mX_{\bj} -\mB_{\bj})^2  \right)
$$

For fixed $\mS$, we simply use Algorithm \ref{alg:heterocp-mean} with weights $\mT*\mS$. Note, though, that we should no longer expect the efficiency improvement from low-rank $\mS$, since, unless $\mT$ is appropriately structured, $\mT*\mS$ will not be low-rank. 

For fixed $\mB$, we follow the same MM argument as in Section~\ref{sec:precision} to get the following mode-$d$ precision-factor update:
\begin{equation} \label{eq:tc-update}
    \widetilde{\bS_d}
    \leftarrow
    \widetilde{\bS_d}
    *
    \left(
        \left[
            \mT_{(d)} \oslash (\widetilde{\bS_d}\bOmega_{d})
        \right]
        \bOmega_{d}^\top
    \right)
    \oslash
    \left(
        (\mT_{(d)} * \mR_{(d)})
        \bOmega_{d}^\top
    \right).
\end{equation}

Rank-one precision, partially heteroskedastic precision, and constrained residuals are handled by following the steps described in Sections~\ref{sec:rank-one},~\ref{sec:partial-hetero}, and~\ref{sec:constraint} to the modified log-likelihood. 
In the partially heteroskedastic case, we note the denominator in Equation \eqref{eq:avg-res} is the number of observed entries, instead of the size of the homoskedastic modes, and both the mask and residual tensors should be aggregated over the homoskedastic modes. This distinction is important under missingness, since different retained entries may summarize different numbers of observed tensor entries. 
Also, the penalty in Theorem~\ref{thm:penalty} will be $\frac{\delta}{2}\| \mT *\mS\|_1$, since the perturbation should only be added to the observed entries of $\mR$. 
Since tensor completion is not the focus of this paper, though, we will not provide a full, detailed, Algorithm in the likeness of Algorithm~\ref{alg:heterocp-full} here. 

We note that the final Hadamard divisor in Equation~\ref{eq:tc-update} is far more likely to have an entry of $0$ than its counterpart in Equation~\ref{eq:MM-update}, making including a constraint as in Section~\ref{sec:constraint} even more necessary. This happens when all observed entries for a row of matricized squared residuals are $0$, becoming more likely when fewer entries are observed. 

Finally, recall in Theorem~\ref{thm:GPCP}, we require a positive entry in each row of $\bA$. In the fully observed case, $\bA = \bone_{N_d} \bone_{N_{-d}}^\top$, so this assumption is met automatically, but in this tensor completion case, an entire row of \(\mX_{(d)}\) could be unobserved, leaving $\bA = \mT_{(d)}$ with no nonzero entries. Disallowing such structured missingness is necessary, since if an entire row of \(\mX_{(d)}\) is unobserved, then the corresponding factor row is unidentifiable from the observed likelihood. 

We should also expect a rank condition, analogous to that in CP-APR \cite{chiandkolda12}, on the columns of \(\bOmega_d\) selected by the observed entries in each row of \(\mX_{(d)}\). A necessary (though not sufficient) condition for \cite{chiandkolda12} is at least \(R\) observed entries in each row of \(\mX_{(d)}\), for all \(d \in [D]\), which is already much more strict than our assumption in Theorem~\ref{thm:GPCP}. We leave a full analysis of the tensor-completion properties of HCP to future work.

\section{Basic Illustrative Examples} \label{sec:appendix-basic}

\subsection{Algorithm~\ref{alg:heterocp-mean} Basic Example} \label{sec:appendix-basic-ex-mean}
For a basic warm-up example of Algorithm~\ref{alg:heterocp-mean}, we consider low-rank $\mB = [\![ \blambda_{\mB};\bB_1,\bB_2,\bB_3]\!]$ with $\bB_1 \in \mathbb{R}^{10 \times 3}, \bB_2 \in \mathbb{R}^{8 \times 3}, \bB_3 \in \mathbb{R}^{12 \times 3}$, each generated entrywise randomly from the standard normal distribution. 

We observe $\mX_{\bj} \sim \mN( \mB_{\bj} , \mS_{\bj}^{-1})$, where $\mS$ is constructed to decrease linearly along the first mode and vary quadratically along the third mode. In particular,
\[
\mS_{ijk}
=
\frac{1}{
\left(0.5\,i\right)
\left(0.25\bigl((k-3)^2+1\bigr)\right)
},
\qquad
i \in [N_1],\;
j \in [N_2],\;
k \in [N_3].
\]

Figure~\ref{fig:mean_res} plots the first horizontal slice and the first lateral slice for $\mX$, the true mean $\mB$, and the recovered $\widehat{\mB}$ from Algorithm~\ref{alg:heterocp-mean}, which we see is a good estimate. The fourth column provides the recovered mean from CP decomposition, for comparison, which we see performs far worse. While this is not a fair comparison -- with the known variance tensor, HCP simply has access to more information than CP -- we get confirmation that Algorithm~\ref{alg:heterocp-mean} can incorporate heteroskedasticity knowledge to improve our estimate of $\mB$. 

\begin{figure}[H]
    \centering
    \includegraphics[width=0.7\textwidth]{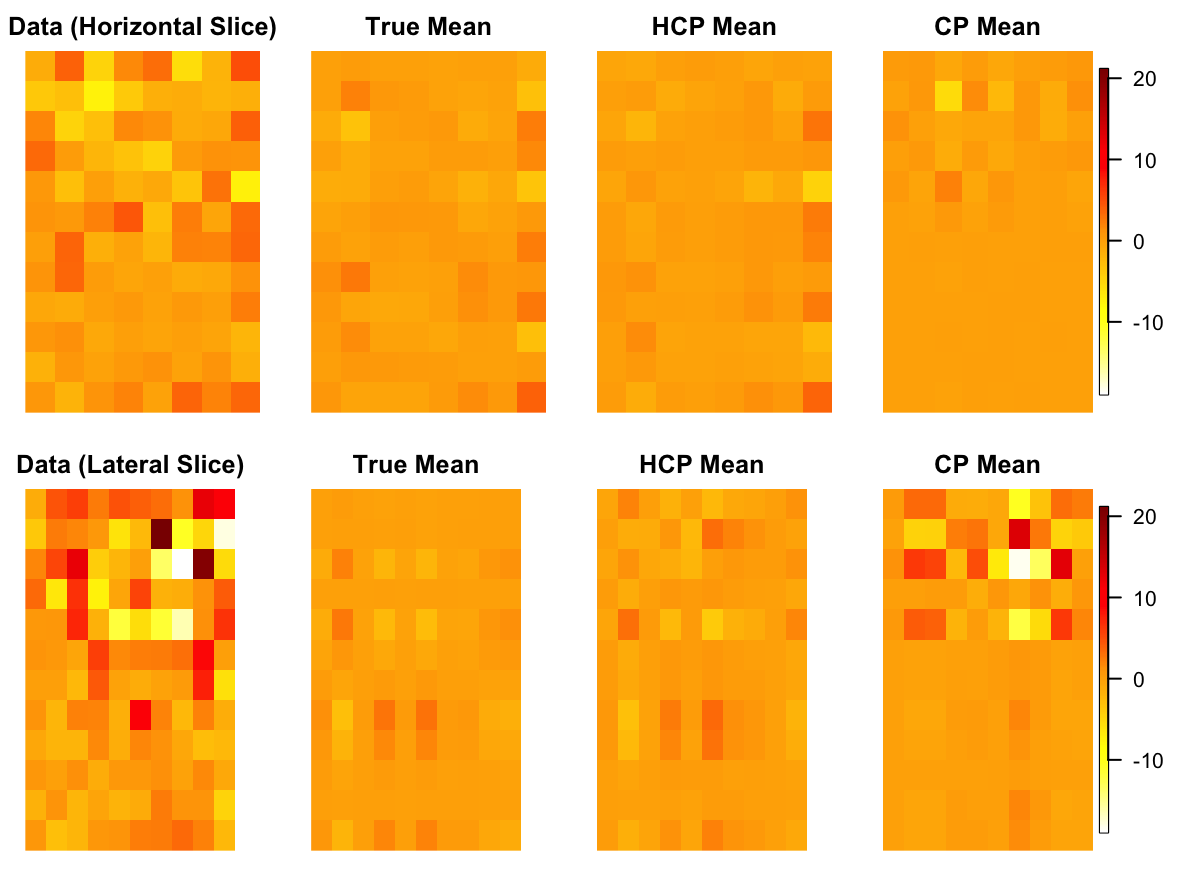} 
    \caption{Results for a basic 3 mode tensor ($10 \times 8 \times 12$) with \( R = 3 \) and known $\mathcal{S}$. Columns display observed data, the true mean, and the estimated mean from HCP and CP, for the first horizontal and lateral slices. 
    }
    \label{fig:mean_res}
\end{figure}

\subsection{Algorithm~\ref{alg:heterocp-precision} Basic Example} \label{sec:appendix-basic-ex-var}
Here, we reverse the example in Appendix~\ref{sec:appendix-basic-ex-mean}, setting $\mS = [\![\blambda_{\mS};\bS_1,\bS_2,\bS_3]\!],$ with $\bS_1 \in \mathbb{R}_+^{10\times 3}$, $\bS_2 \in \mathbb{R}_+^{8\times 3}$, and $\bS_3 \in \mathbb{R}_+^{12\times 3}$. We generate the factor entries by randomly selecting approximately $1/R$ of the entries to be generated from $\operatorname{Unif}(1,2)$, and generate the remaining entries from $\operatorname{Unif}(0.1,0.2)$. This more complicated structure is necessary since the precision factor matrices are positive \citep{chiandkolda12}; if we used a single distribution for all entries, their common positive mean can reinforce an approximately rank-one structure in $\mS$. We skip construction of $\mB$ and simply set it to zero,  since only the residuals enter the precision update, anyway. 

Figure~\ref{fig:var_res} plots the first horizontal and lateral slices for the observed residuals $\mR$, the true variance $\mS^{*-1}$, and the recovered $\widehat{\mS}^{*-1}$ from Algorithm~\ref{alg:heterocp-precision}. Like the subroutine for estimating the mean, given the precision, we see that this subroutine for estimating the precision, given the mean, has strong recovery. Without any competitors, though, we can't provide a comparison as we do in Appendix~\ref{sec:appendix-basic-ex-mean}. 

\begin{figure}[H]
    \centering
    \includegraphics[width=0.55\textwidth]{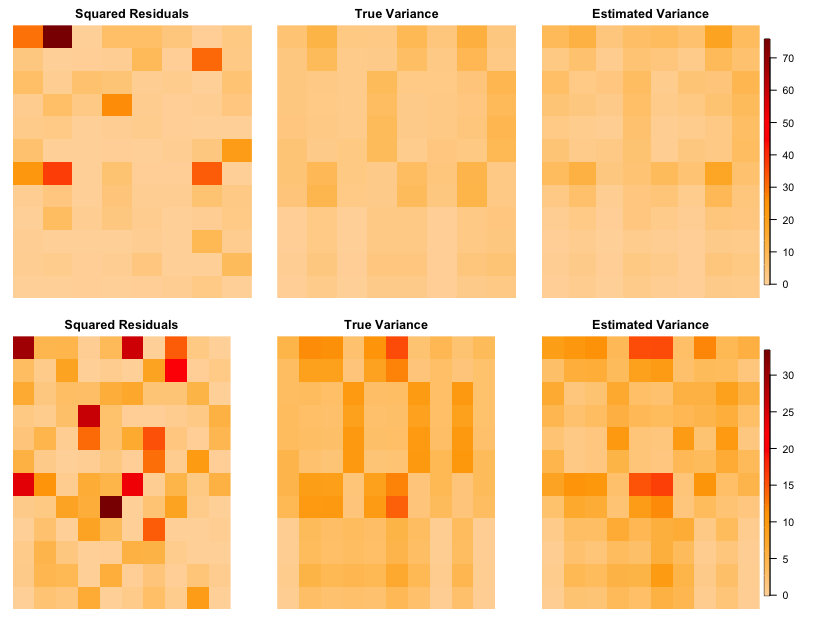} 
    \vspace{-1em}
    \caption{Results for a 3 mode tensor ($10 \times 8 \times 12$) with \( R = 3 \) and known $\mathcal{B} = 0$. Columns display the squared residuals, true variance, and estimated variance, for the first horizontal and lateral slices. 
    }
    \label{fig:var_res}
\end{figure}

\subsection{Algorithm~\ref{alg:heterocp} Basic Example} \label{sec:appendix-basic-ex-both}

Now, with strategies in place to alleviate the problems from the unbounded likelihood, we move on to an example that combines the examples of Appendix~\ref{sec:appendix-basic-ex-mean} and~\ref{sec:appendix-basic-ex-var}. Specifically, both $\mB$ and $\mS$ are rank $3$, and the factor matrices are generated as in Sections~\ref{sec:appendix-basic-ex-mean} and~\ref{sec:appendix-basic-ex-var}, respectively. We also assume $\mS$ is constant along the third mode, and leverage the techniques in Section~\ref{sec:partial-hetero}. We set the constraint (as in Section~\ref{sec:constraint}) to $\delta=0.001$. 

We plot the first frontal slices of the observed data, true mean and variance, and estimated mean and variance, in Figure~\ref{fig:both_res}, where we see that the recovery is strong for both parameters. We also provide the estimate for the mean from CP decomposition, for comparison, which does appear to be visually worse than HCP. 

\begin{figure}[H]
    \centering
    \includegraphics[width=0.7\textwidth]{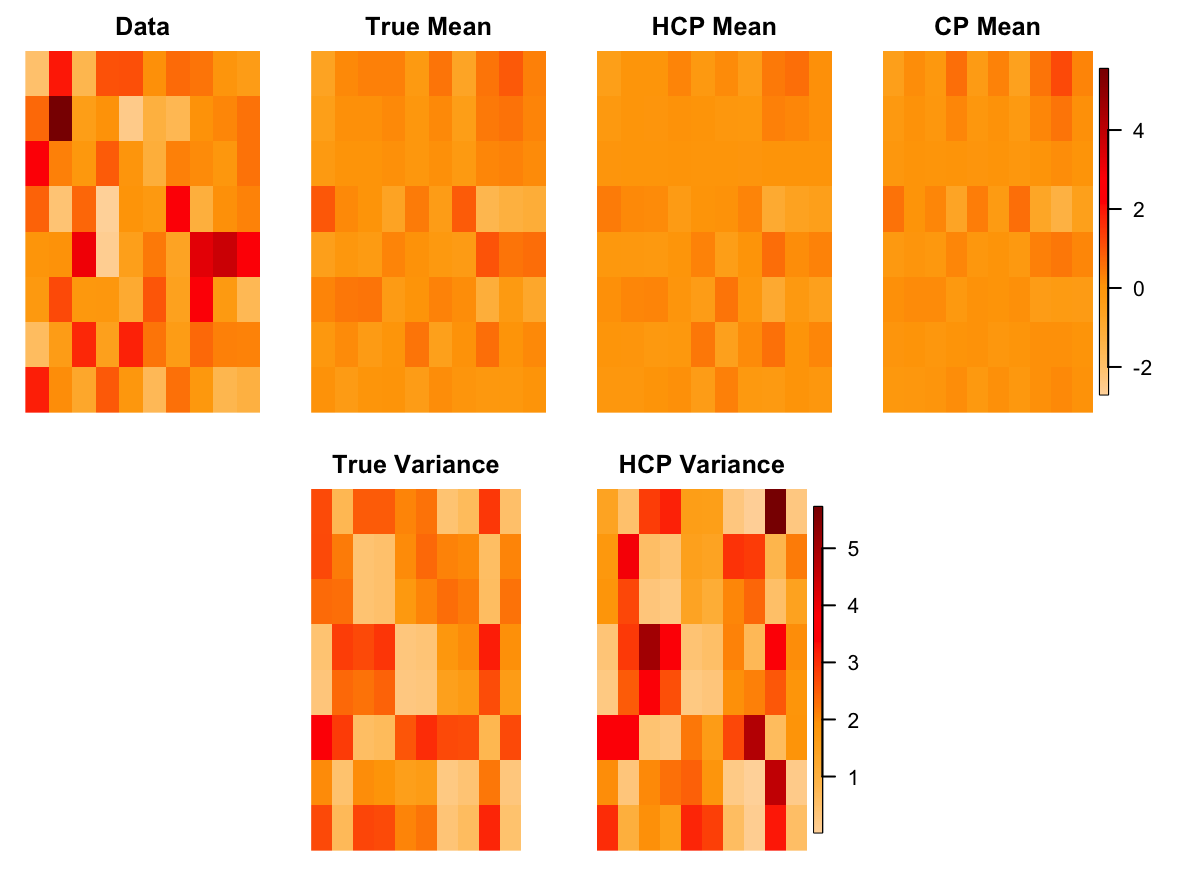} 
    \vspace{-1em}
    \caption{Results for a 3 mode tensor ($10 \times 8 \times 12$) with \( R = 3 \). 
    For the first frontal slice we show the data, true mean tensor, and estimated mean tensors from HCP and CP in the first row, and true vs learned variance tensors in the second row.}
        \label{fig:both_res}
    \end{figure}

\section{Expanded Parameter Recovery Benchmark} \label{sec:appendix-simulation}

In this section, we provide more details regarding the simulation study in Section~\ref{sec:simulation}, as well as provide additional results for a fully heteroskedastic precision tensor. In addition, we analyze the effect of rank misspecification. 

We will scale up the basic example from Appendix~\ref{sec:appendix-basic-ex-both} to varying tensor sizes and assess the recovery over many simulations. Since the true low-rank tensors $\mB$ and $\mS$ are known, 
we use $R_{\mB} = R_{\mS} = 3$ and measure performance with Relative Errors $\frac{\|\mB - \widehat{\mB} \|_{F}}{\|\mB\|_{F}}$ and $\frac{\|\mS - \widehat{\mS} \|_{F}}{\|\mS\|_{F}}$. $\mX$ will be size $m_{\text{mult}} \cdot \left(10 \times 8 \times 12 \right)$, with $m_{\text{mult}} \in \{1.5, 2, 3, 6\}$. We will present here the results both for partial and fully heteroskedastic $\mS$ (so $Q = 1$ and $Q = 0$, respectively, in the notation of Section~\ref{sec:partial-hetero}), repeating the results from Section~\ref{sec:simulation}, where only partially heteroskedastic $\mS$ is considered, for easier comparison. For our constraint we use $\delta = 0.001$ and $\delta = \sqrt{\epsilon_{\mathrm{mach}}}$, labeled HCP and HCP0, respectively. We will compare our method against CP decomposition for the mean recovery. For the recovery of the precision, there are no other competitors. 

Figure~\ref{fig:sim-results} provides our results as boxplots over 50 simulations. As we can see from Figure~\ref{fig:mean-sim}, our methods, regardless of the constraint or the exploitation of partial heteroskedasticity, outperform CP decomposition. For the recovery of the precision tensor, we have no competitors to compare against, but, as expected, our relative errors decrease with the size of the tensor. In addition, there is almost no difference in the mean or precision recovery between HCP and HCP0, particularly for the larger tensor sizes. This motivates our choice of $\delta = \sqrt{\text{Machine-Precision}}$ as the default, since in many cases this small constraint is sufficient. Finally, as expected, estimation when $\mS$ is fully heteroskedastic is a more difficult task. 

\begin{figure}[ht]
    \centering
    \begin{subfigure}[b]{0.48\textwidth}
        \centering
        \includegraphics[width=\textwidth]{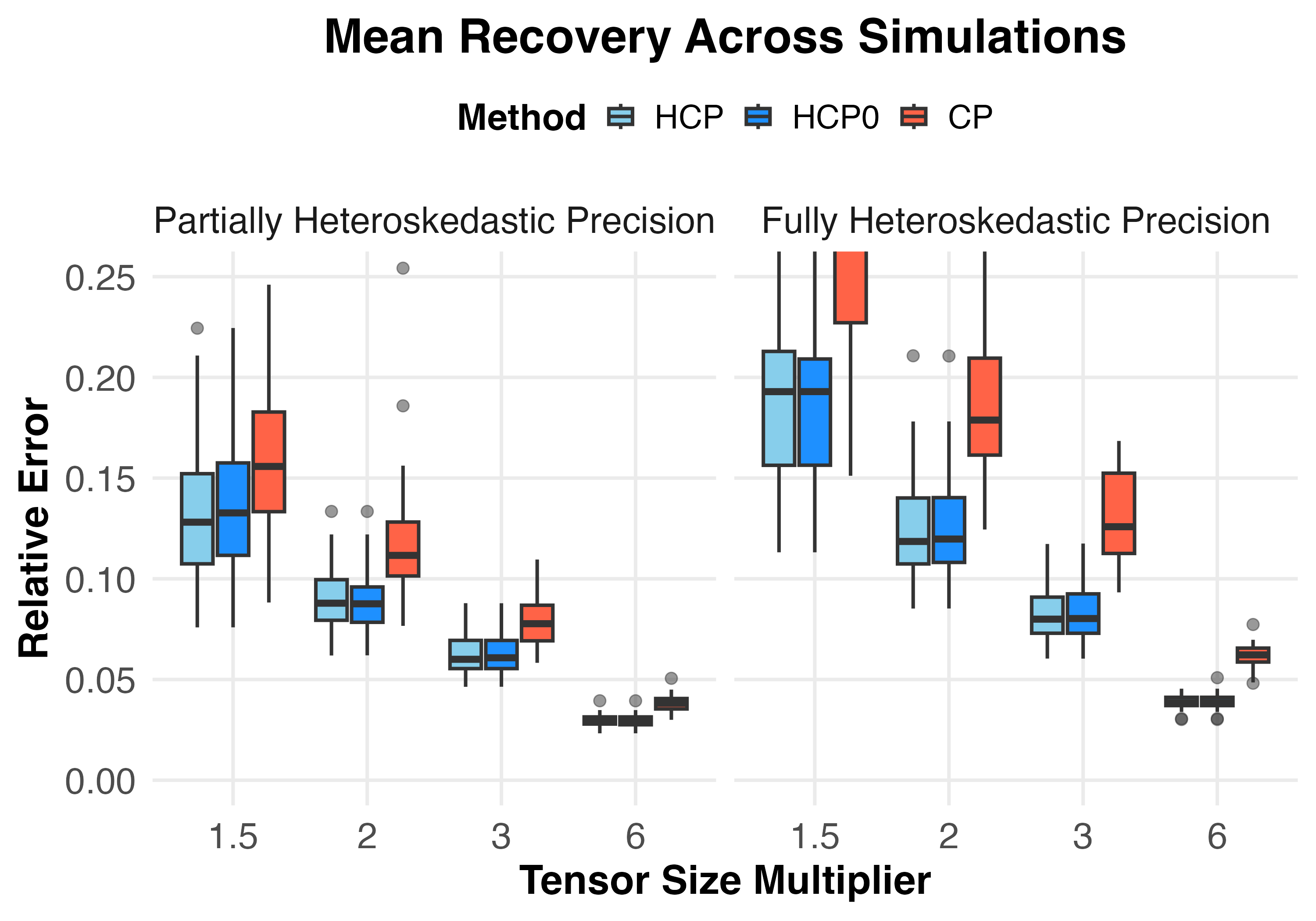}
        \caption{Relative error for mean recovery across simulations.}
        \label{fig:mean-sim}
    \end{subfigure}
    \hfill
    \begin{subfigure}[b]{0.48\textwidth}
        \centering
        \includegraphics[width=\textwidth]{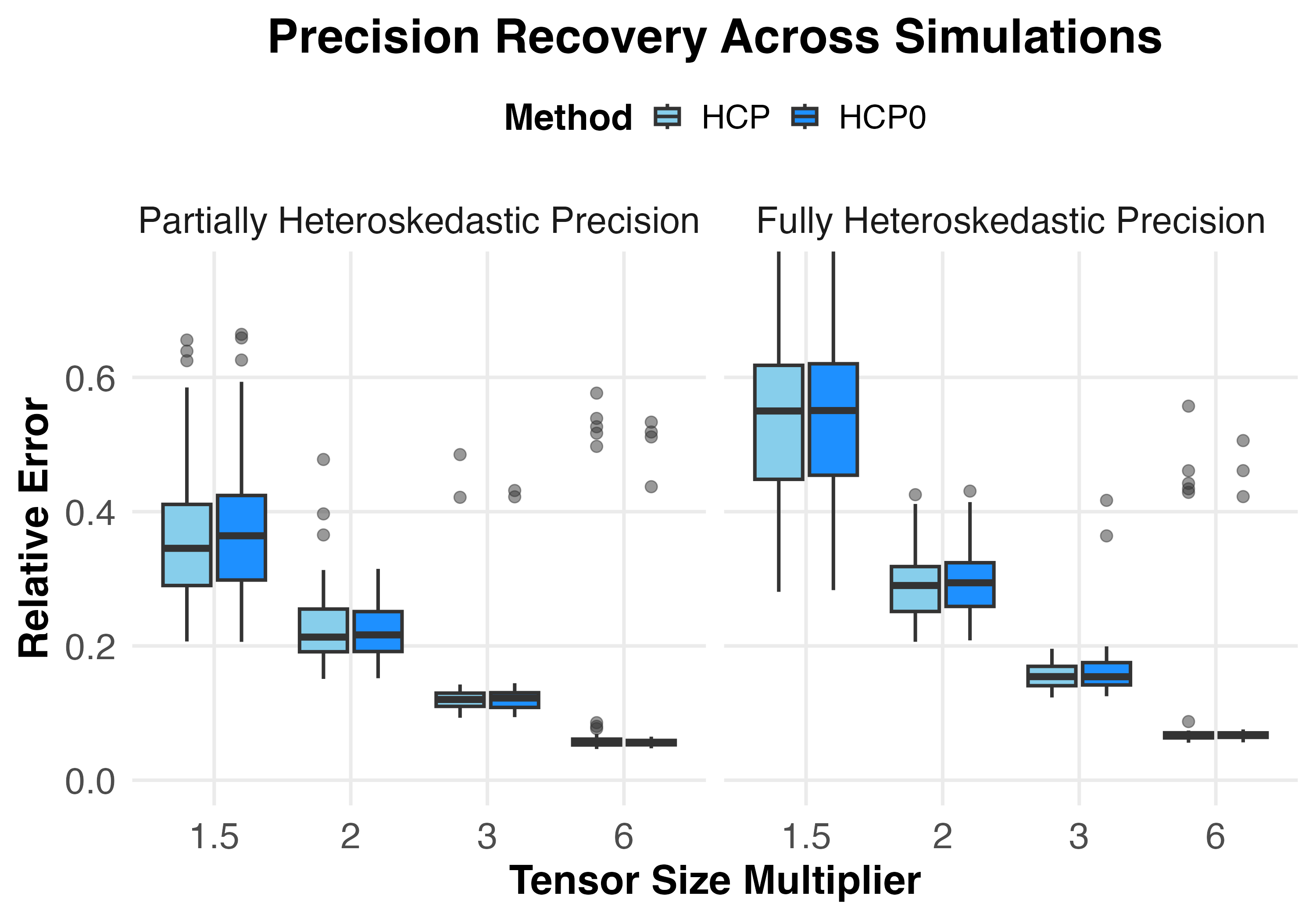}
        \caption{Relative error for precision recovery across simulations.}
        \label{fig:var-sim}
    \end{subfigure}
    \caption{Simulation results comparing recovery performance across tensor sizes and heteroskedasticity structures. The left panel shows relative error in recovery of $\mB$, while the right panel shows relative error in recovery of $\mS$. }
    \label{fig:sim-results}
\end{figure}

Next, we will intentionally fit models with ranks different from the truth (\(R=3\)) in order to assess whether performance degrades gracefully under rank misspecification. Rather than considering a full grid over mean and precision ranks, we vary one rank at a time while holding the other fixed at its true value. Specifically, we consider two rank-robustness experiments:
\[
R_{\mB} \in \{1,2,3,4,5\}, 
\qquad 
R_{\mS} = 3,
\]
and
\[
R_{\mB} = 3,
\qquad 
R_{\mS} \in \{1,2,3,4,5\}.
\]
This design isolates the effect of misspecifying the rank of the mean model from the effect of misspecifying the rank of the precision model. Ordinary CP decomposition is included as a baseline when studying mean recovery; for the varying mean-ranks, CP is fit with the corresponding mean rank. Performance is again measured using relative Frobenius error. 

Figures~\ref{fig:meanmean}--\ref{fig:varvar} summarize the results. Figure~\ref{fig:meanmean} shows mean recovery when the fitted mean rank is varied while the precision rank is fixed at its true value. Figure~\ref{fig:varmean} shows precision recovery under this same mean-rank variation, illustrating how misspecification of the mean model can also affect estimation of the heteroskedastic structure through the residuals.
Figure~\ref{fig:meanvar} shows mean recovery when the precision rank is varied while the mean rank is fixed at its true value, assessing whether mean estimation is sensitive to misspecification of the variance model. Finally, Figure~\ref{fig:varvar} shows precision recovery when the fitted precision rank is varied. The labels ``Partially Heteroskedastic'' and ``Fully Heteroskedastic'' indicate whether the precision tensor is constrained to be constant along one mode or allowed to vary along all modes. 

\begin{figure}[]
    \centering
    \includegraphics[width=0.9\textwidth]{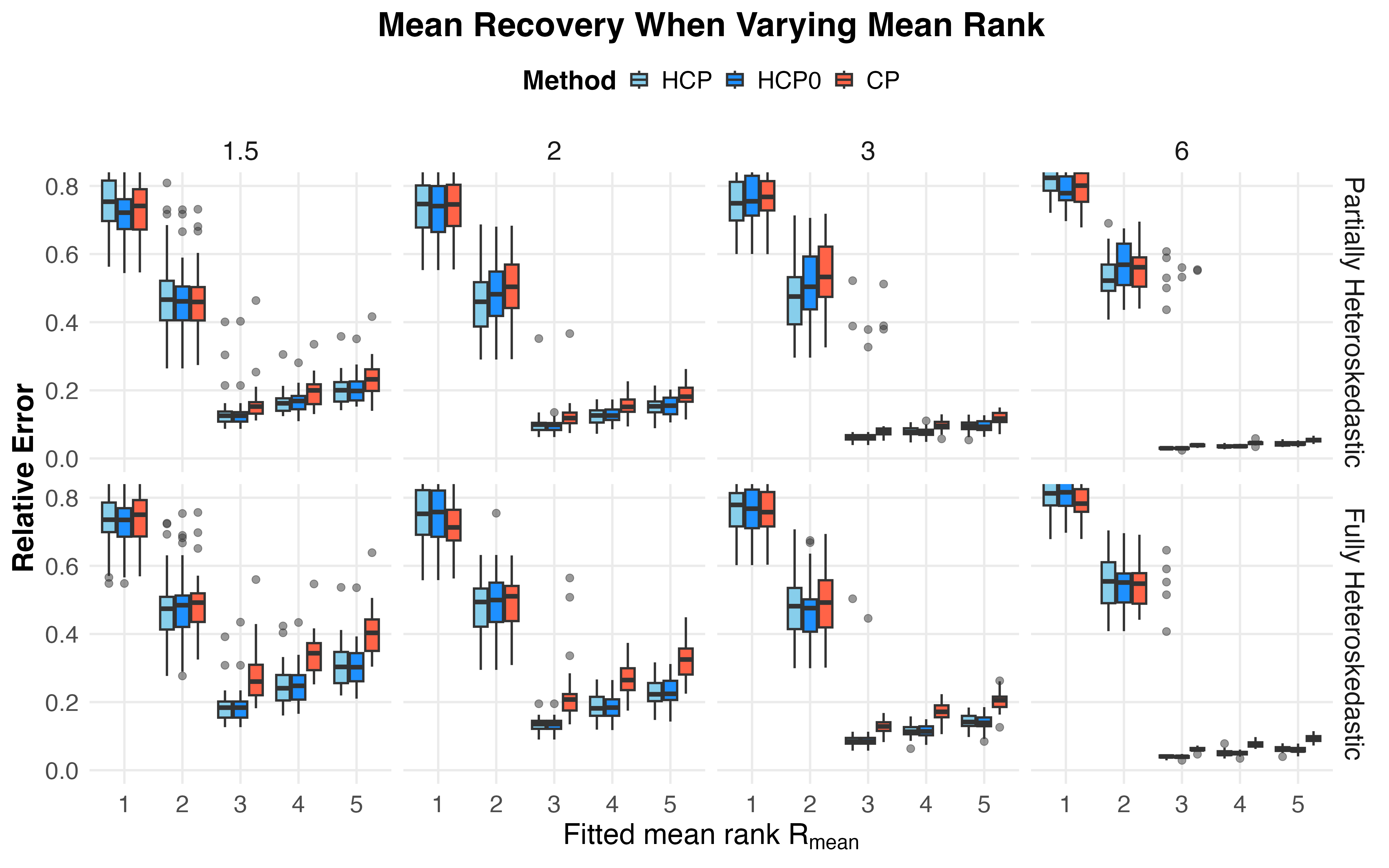}
    \caption{
    Mean recovery when varying the fitted mean rank \(R_{\mB}\), with the precision rank fixed at its true value \(R_{\mS}=3\). The true mean rank is \(R_{\mB}=3\). The worst performance for both methods occurs when the fitted mean rank is too small. 
    }
    \label{fig:meanmean}
\end{figure}

\begin{figure}[]
    \centering
    \includegraphics[width=0.95\textwidth]{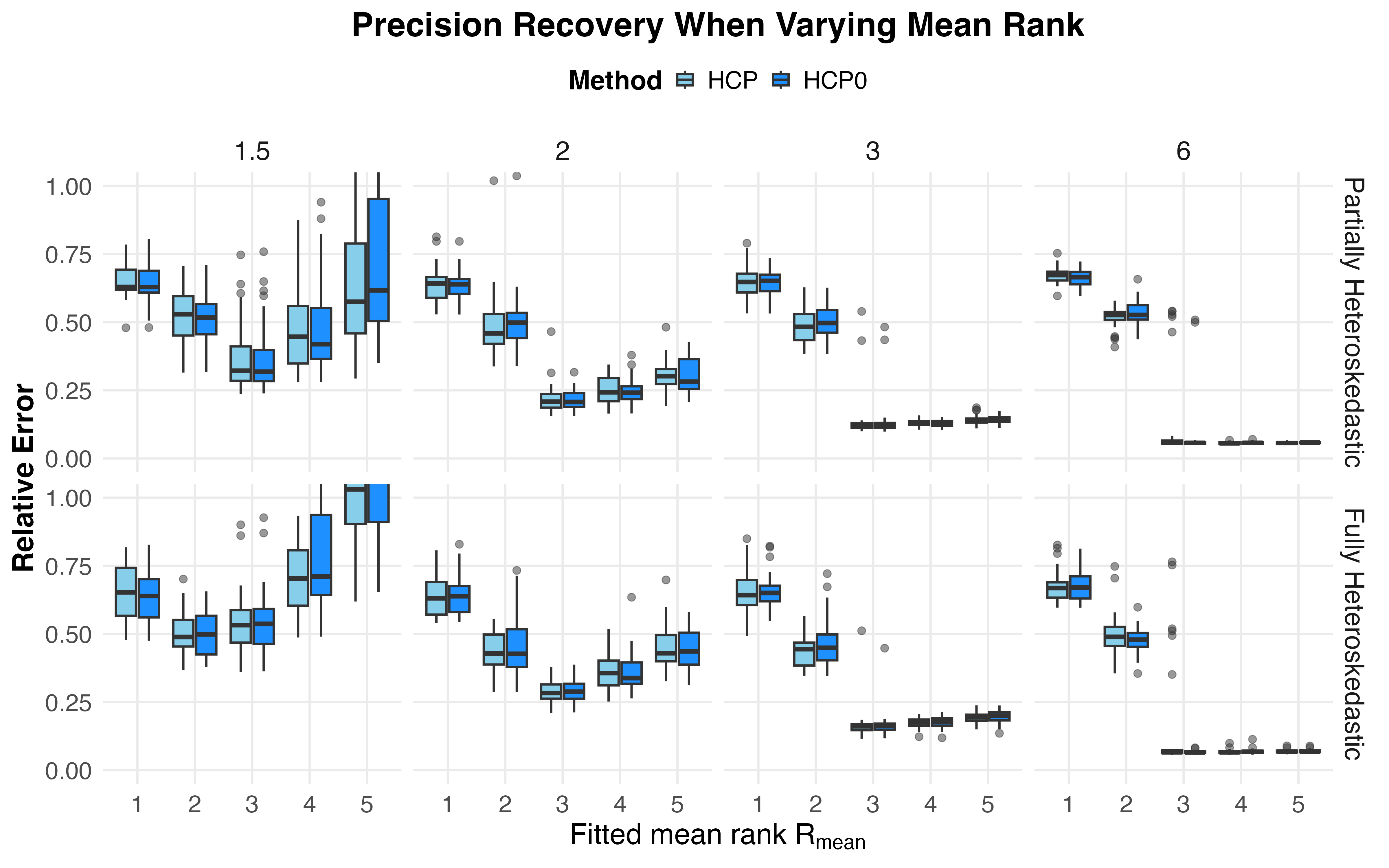}
    \caption{
    Precision recovery when varying the fitted mean rank \(R_{\mB}\), with the precision rank fixed at its true value \(R_{\mS}=3\). This plot shows how misspecification of the mean rank affects estimation of the precision tensor. 
    }
    \label{fig:varmean}
\end{figure}

\begin{figure}[]
    \centering
    \includegraphics[width=0.95\textwidth]{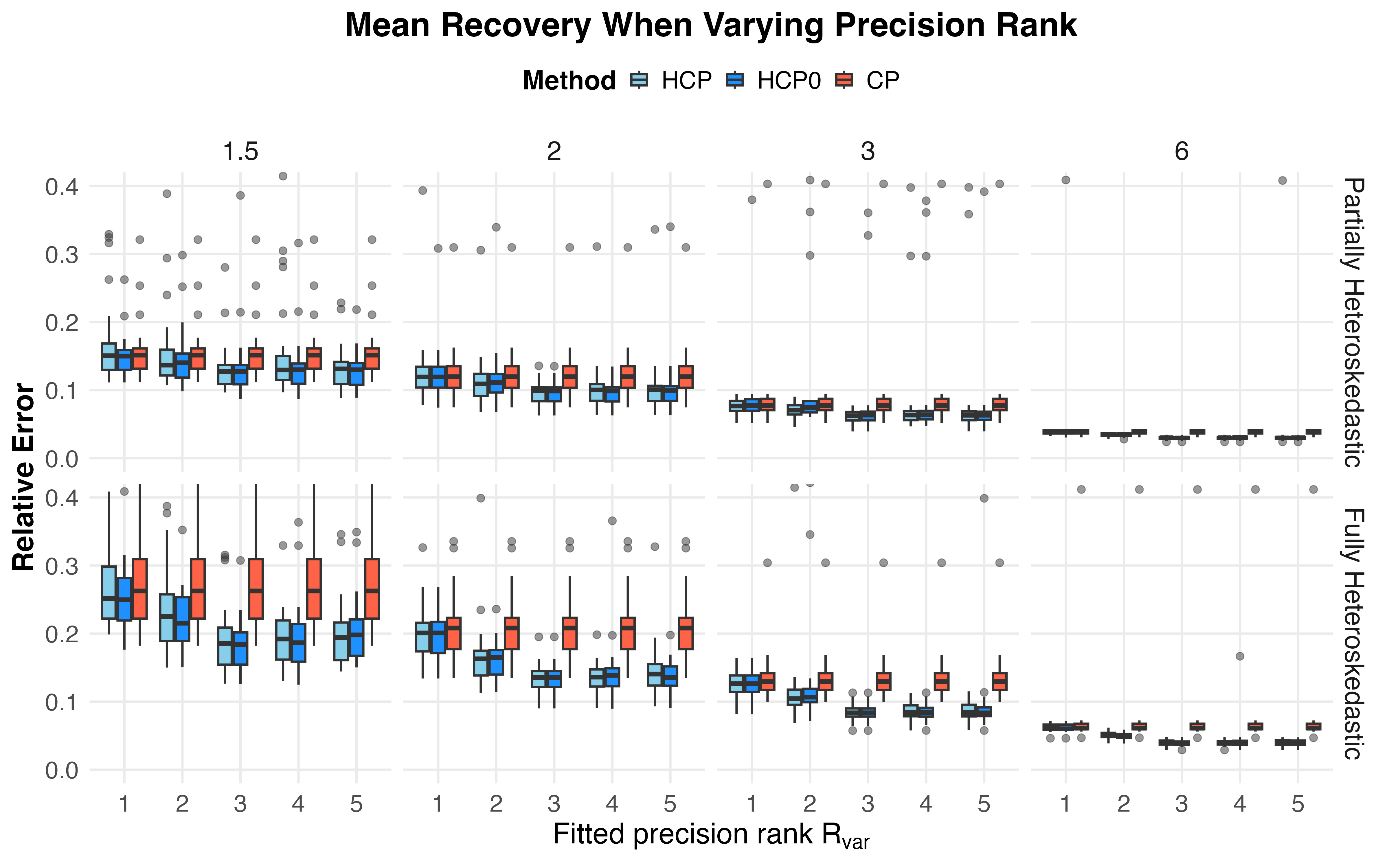}
    \caption{
    Mean recovery when varying the fitted precision rank \(R_{\mS}\), with the mean rank fixed at its true value \(R_{\mB}=3\). Importantly, even for the worst fitted precision ranks, HCP outperforms CP in mean recovery. 
    }
    \label{fig:meanvar}
\end{figure}

\begin{figure}[]
    \centering
    \includegraphics[width=0.95\textwidth]{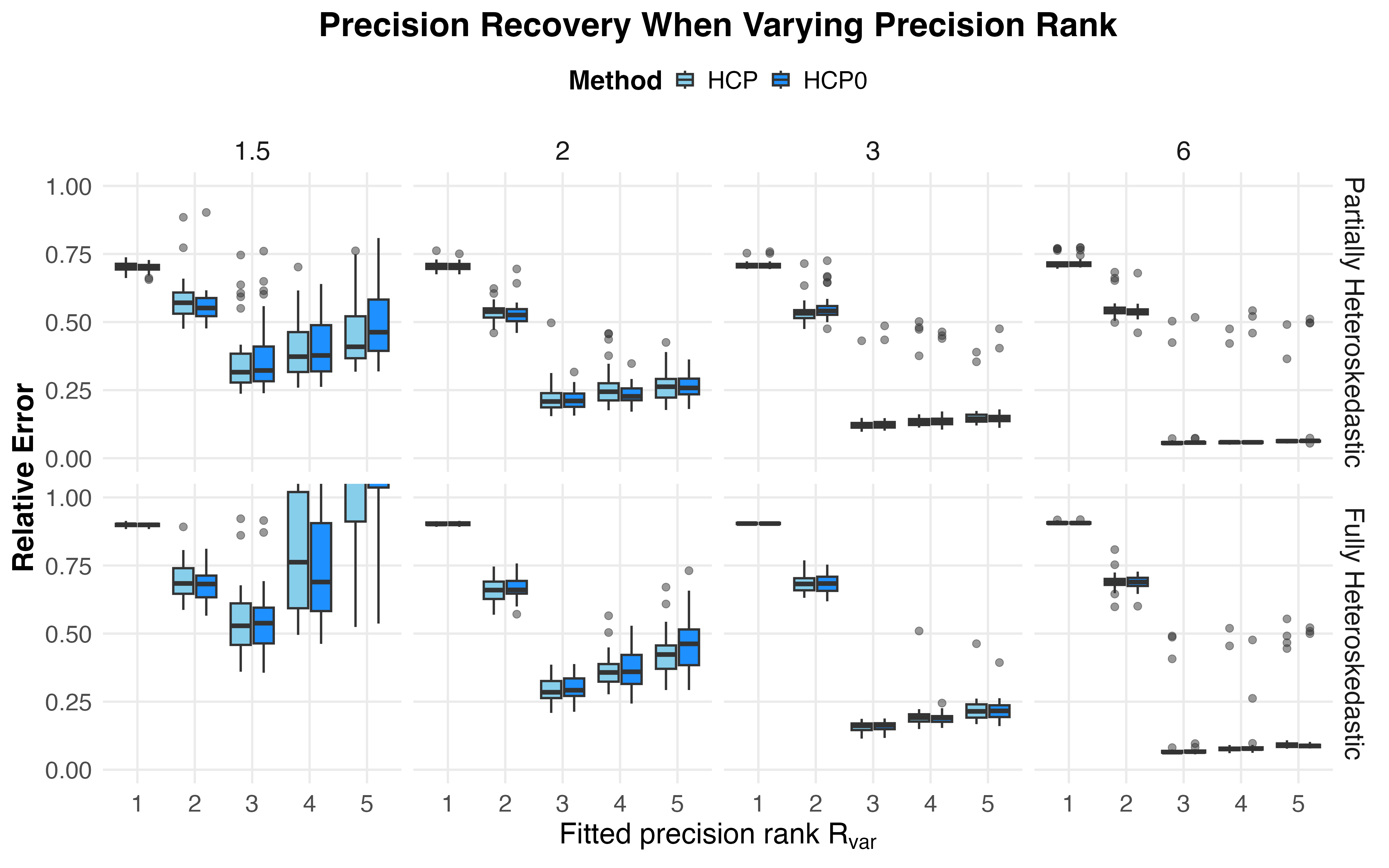}
    \caption{
    Precision recovery when varying the fitted precision rank \(R_{\mS}\), with the mean rank fixed at its true value \(R_{\mB}=3\). The true precision rank is \(R_{\mS}=3\). Unlike for the mean, if the tensor is small, the fitted rank being too large can be more detrimental than being too small. 
    }
    \label{fig:varvar}
\end{figure}

As we can see, it is generally more detrimental to underestimate the rank than it is to overestimate the rank, consistent with standard model selection discussions for CP \citep{core2003, koldaandbader09}. 
For HCP, since an improvement in the fitted log-likelihood does not imply a lower Frobenius error to the true mean, this problem is even more severe, and in Figure~\ref{fig:meanmean} we see that underestimating $R_{\mB}$ can sometimes lead to CP  outperforming HCP. Furthermore, for small tensors, the precision recovery does heavily deteriorate when the fitted rank is too large, indicating a likely concern with overfitting in these cases.

We conclude by drawing attention to Figure~\ref{fig:meanvar}, the mean recovery where \(R_{\mS}\) varies. For practitioners using HCP, it would be problematic if a poor choice of \(R_{\mS}\) could deteriorate HCP's performance below that of CP, but even in the most extreme case of \(R_{\mS} = 1\), the estimate of $\widehat{\mB}$ from HCP outperforms that of CP decomposition. Combined with the results from Section \ref{sec:dist-study}, we have evidence that, in the presence of heteroskedastic noise, the mean estimate benefits from incorporating noise information, even if characterization of the noise structure is incorrect.

\section{Additional Details for Experiments}
\subsection{Experiment Summary} \label{sec:appendix-experiment-table}

In this section we provide a summary of the experimental designs for all of the experiments performed in the main manuscript and the appendix sections. 

\renewcommand{\arraystretch}{2}
\begin{table}[H]
\centering
\caption{Summary of HCP experimental design choices. \(\mathcal{H}\subseteq [D]\) denotes the homoskedastic modes in the precision tensor $\mS$, and $\mathcal{H} = \varnothing$ refers to fully heteroskedastic precision. The constraint column gives the perturbation value \(\delta\) used in the precision update.}
\label{tab:hcp-experiment-design-summary}
\scriptsize
\begin{tabular}{lllllll}
\toprule
& Experiment
& Dimension 
& \makecell[l]{Ranks \\ HCP} 
& \makecell[l]{Ranks \\ Competitor(s)} 
& \makecell[l]{Homoskedastic \\ modes $(\mathcal{H})$}  
& Constraint $(\delta)$  \\

\midrule
& \makecell[l]{Basic Demo 1 \\ (\ref{sec:appendix-basic-ex-mean})} 
& \(10\times 8\times 12\)
& \(R_{\mB}=3\), ~~~~~ --
& \(R=3\) 
& ~~ --
& ~~ --\\

& \makecell[l]{Basic Demo 2 \\ (\ref{sec:appendix-basic-ex-var})} 
& \(10\times 8\times 12\)
& ~~~~ -- ~~~, \(R_{\mS}=3\),
& ~~ --
& \(\mathcal{H}=\varnothing\)
& $\delta =0$ \\

& \makecell[l]{Basic Demo 3 \\ (\ref{sec:appendix-basic-ex-both})} 
& \(10\times 8\times 12\)
& \(R_{\mB}=3\),~ \(R_{\mS}=3\)
& \(R=3\) 
& \(\mathcal{H}=\{3\}\)
& $\delta = 0.001$ \\

& \makecell[l]{Butterfly \\ (\ref{sec:butterfly})} 
& \(200\times 200\times 17\)
& \(R_{\mB}=16\), \(R_{\mS}=3\)
& \(R=16\) 
& \(\mathcal{H}=\{3\}\)
& \(\delta = \sqrt{\epsilon_{\mathrm{mach}}}\) \\

& \makecell[l]{Distributions \\ (\ref{sec:dist-study})} 
& \(40\times 32\times 48\)
& \(R_{\mB}=3\),~ \(R_{\mS}=3\)
& \(R=3\) 
& \(\mathcal{H}=\varnothing\)
& \(\delta = \sqrt{\epsilon_{\mathrm{mach}}}\) \\

& \makecell[l]{Recovery \\ (\ref{sec:simulation}, \ref{sec:appendix-simulation})} 
& \makecell[l]{\((10\times 8\times 12) * m_{\text{mult}}\)
\\ $m_{\text{mult}} \in \{ 1.5, 2, 3, 6\}$}
& \(R_{\mB}=3\),~ \(R_{\mS}=3\)
& \(R=3\) 
& \(\mathcal{H}\in\{\{3\},\varnothing\}\)
& \(\delta \in \{ 0.001, \sqrt{\epsilon_{\mathrm{mach}}} \} \) \\

& \makecell[l]{Varying-ranks \\ (\ref{sec:appendix-simulation})} 
& \makecell[l]{\((10\times 8\times 12) * m_{\text{mult}}\)
\\ $m_{\text{mult}} \in \{ 1.5, 2, 3, 6\}$}
& \makecell[l]{\(R_{\mB} \in [5]\), \(R_{\mS}=3\)
\\ \(R_{\mB} = 3\),~ \(R_{\mS}\in [5]\)}
& \makecell[l]{\(R \in [5] \)\\ \(R=3\)} 
& \(\mathcal{H}\in\{\{3\},\varnothing\}\)
& \(\delta \in \{ 0.001, \sqrt{\epsilon_{\mathrm{mach}}} \} \) \\

& \makecell[l]{EEG \\ (\ref{sec:EEG},\ref{sec:appendix-EEG-preprocessing})} 
& \(64\times 6099\times 45\)
& \(R_{\mB}=3\),~ \(R_{\mS}=1\)
& \(R \in \{3, 4\}\) 
& \(\mathcal{H}=\{3\}\)
& \(\delta = \sqrt{\epsilon_{\mathrm{mach}}}\) \\

\bottomrule
\end{tabular}
\end{table}

\renewcommand{\arraystretch}{1}

\subsection{Generalized CP Decomposition Implementation} \label{sec:appendix-gcp}
The experiments in Section~\ref{sec:dist-study} compare HCP with Generalized CP (GCP)  \citep{Hong_GCP_2020}. We use this section to provide details on our GCP implementation. 

For a data tensor \(\mX\), GCP estimates a low-rank tensor \(\widehat{\mB}=[\![\blambda;\bB_1,\ldots,\bB_D]\!]\) by solving
\[
    \min_{\widehat{\mB}}
    \sum_{\bj} f(\mX_{\bj},\widehat{\mB}_{\bj}),
\]
where \(f\) is chosen to match a distributional model. The estimated low-rank tensor corresponds to the mean or natural parameter of the chosen loss. 

All GCP competitors were fit using \texttt{pyttb} \citep{TTB_Software} with an L-BFGS-B optimizer. We matched ranks to HCP's $R_{\mB}$. 

The Gamma GCP used the built-in \texttt{pyttb} Gamma objective. For Poisson GCP, we did not use the default Poisson objective wrapper, which expects integer count data. Instead, we passed the Poisson loss and gradient handles directly. For the inverse-Gaussian GCP, we used the entrywise loss and gradient
\[
    f(x,m)
    =
    \frac{(x-m)^2}{2xm^2}, \qquad 
    \frac{\partial f}{\partial m}
    =
    \frac{m-x}{m^3}.
\]
The Gamma and inverse-Gaussian objectives require positive observations. Since Gaussian and Poisson generating families do not guarantee strictly positive observations in principle, we verified this condition empirically for every retained synthetic tensor used in Section~\ref{sec:dist-study}. No shifting, clipping, truncation, resampling, or offset was applied in the experiments of Section~\ref{sec:dist-study}.

\subsection{EEG Preprocessing and Rank-4 Results}
\label{sec:appendix-EEG-preprocessing}

This appendix provides additional details on the preprocessing steps used to construct the EEG tensor analyzed in Section~\ref{sec:EEG}, as well as some additional results.

We use the publicly available EEG Motor Movement/Imagery Dataset from PhysioNet \citep{Goldberger2000PhysioNet, PhysioNetEEGMMI}, collected using the BCI2000 system \citep{Schalk2004BCI2000}. The full dataset contains 64-channel EEG recordings from 109 subjects, with each subject completing 14 experimental runs. The EEG signals were recorded from 64 electrodes placed according to the international 10-10 system, excluding electrodes Nz, F9, F10, FT9, FT10, A1, A2, TP9, TP10, P9, and P10. 
Thus, throughout Section~\ref{sec:EEG}, a ``channel'' refers to one of these electrode locations on the scalp. We generated the scalp maps in Figure~\ref{fig:eeg-brainfactors} from MNE-Python \citep{Gramfort2014MNE}.

In our analysis, we use a single subject performing task \(1\), where a target appears on either the left or the right side of the screen, and the subject opens and closes the corresponding fist until the target disappears. This corresponds to runs 3, 7, and 11 in the PhysioNet protocol. The event annotations \(T1\) and \(T2\) correspond to left- and right-fist movement onsets, respectively, while \(T0\) corresponds to rest. We only retain the task-related \(T1\) and \(T2\) epochs. Each epoch begins at the annotated movement onset and extends from \(0\) to \(4\) seconds relative to that onset. Since the PhysioNet EEG Motor Movement/Imagery data are sampled at \(160\) Hz, each un-decimated epoch contains \(641\) time samples, including both endpoints.

The raw EEG recordings are loaded from the EDF files, the EEGBCI channel names are standardized, and a standard 10-05 montage is assigned using MNE-Python \citep{Gramfort2014MNE}. We retain only EEG channels and apply a \(1\)--\(40\) Hz bandpass filter prior to the time-frequency transform.

For each epoch and channel, we compute a complex Morlet wavelet
time-frequency transform. Let \(x_{i,k}(t)\) denote the EEG signal from
channel \(i\) during epoch \(k\). For each frequency \(f\), the Morlet
transform produces complex coefficients of the form $W_{i,f,t,k} \in \mathbb{C},$
indexed by channel \(i\), frequency \(f\), time \(t\), and epoch \(k\).
In our implementation, we use the frequency grid $f \in \{4,6,8,\ldots,40\}\text{ Hz},$  giving \(19\) frequencies in total. The number of Morlet cycles is chosen to depend on frequency according to $n_{\mathrm{cycles}}(f) = \frac{f}{2}.$ Thus, for example, the \(4\) Hz wavelet uses \(2\) cycles, the \(10\) Hz wavelet uses \(5\) cycles, and the \(40\) Hz wavelet uses \(20\) cycles. The time-frequency coefficients are decimated by a factor of \(2\), so the effective time grid has spacing \(2/160 = 0.0125\) seconds. Over the \(0\)--\(4\) second epoch window, this yields \(321\) time samples. We
convert the complex-valued coefficients to real-valued features by taking
their magnitudes, leaving, for each channel
and epoch, $19 \times 321 = 6099$
real-valued coefficients.

The resulting array initially has dimensions
\[
    \text{epoch} \times \text{channel} \times \text{frequency} \times
    \text{time}.
\]
For each channel and epoch, the frequency-by-time matrix of wavelet
magnitudes is vectorized into a single time-frequency feature vector.
Stacking these vectors yields a third-order tensor
\[
    \mX \in \mathbb{R}^{64 \times 6099 \times 45},
\]
whose modes correspond to channel, vectorized time-frequency coefficient,
and measurement/epoch. The \(45\) measurements correspond to the
\(T1\) and \(T2\) task epochs extracted from the three task-1 runs.

Finally, as in \cite{Acar_2011_incomplete}, we center the tensor across
the channel mode. Specifically, for each time-frequency coefficient \(j\)
and epoch \(k\), we subtract the mean over channels:
\[
    \widetilde{X}_{i,j,k}
    =
    X_{i,j,k}
    -
    \frac{1}{64}
    \sum_{i'=1}^{64} X_{i',j,k}, 
\]
and rescale the tensor entries by \(10^6\), so that the final tensor used in the factorization is expressed on a microvolt scale. 

Our preprocessing follows the structure used by \cite{Acar_2011_incomplete}: complex wavelet coefficients are converted to real values using magnitudes, the data are arranged as a channel-by-time-frequency-by-measurement tensor, and the tensor is centered across the channel mode. However, their EEG example uses a proprietary dataset and reports a time-frequency vector of length \(4392\), whereas our PhysioNet preprocessing uses \(19\) frequencies and \(321\) time samples, yielding \(6099\) time-frequency coefficients.

Finally, we present an additional result, where we fit CP decomposition with $R=4$. Since HCP requires extra parameters to fit the precision tensor, the total number of parameters with $R_{\mB}=3$ and $R_{\mS}=1$ is nearly\footnote{Since $\mS$ is partially heteroskedastic, it has $64 + 6099$ parameters, while an additional rank for CP adds $64 + 6099 + 45$ parameters} the same number of parameters as CP with $R=4$, making comparisons to both CP with $R=3$ and $R=4$ logical. The recovered factors, alongside the original factors with $R=3$ from Figure~\ref{fig:eeg-brainfactors}, are presented in Figure~\ref{fig:eeg-brainfactors-extra}. As we can see, the additional component does not allow CP to capture the localized channel signal that HCP captures in the third component. Instead, CP uses this extra component to further model the low-frequency pattern that is most likely variance, rather than true signal. 

\begin{figure}[]
    \centering
    \includegraphics[width=\textwidth]{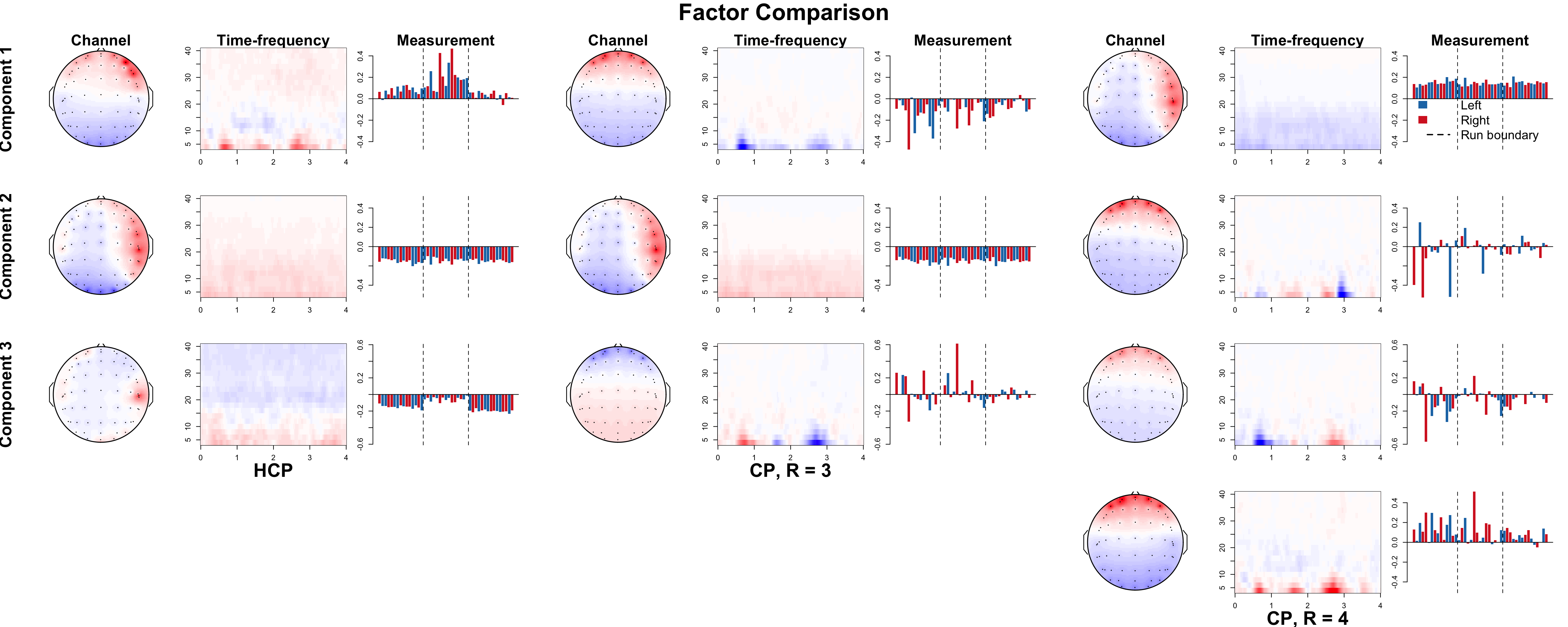}
    \caption{Visualization of the factor matrices recovered from the EEG tensor by HCP and CP with rows corresponding to components. Within each method, columns show the channel/scalp loadings, time-frequency loadings, and measurement loadings. Dashed vertical lines in measurement columns indicate recording-run boundaries.}
        \label{fig:eeg-brainfactors-extra}
\end{figure}

\subsection{Computing Infrastructure \& Software}
\label{sec:appendix-computing}

\noindent\textbf{Computing Infrastructure.}
The experiments in Sections \ref{sec:butterfly},  \ref{sec:EEG}, and \ref{sec:appendix-basic} were run on a 16-inch MacBook Pro with
    2.4 GHz 8-Core Intel Core i9 CPU, 32 GB of RAM, and no dedicated GPU.

The experiments in Sections \ref{sec:simulation}, \ref{sec:dist-study}, and \ref{sec:appendix-simulation} were run on a single node of an HPC cluster equipped with dual 14-core Intel Xeon processors (28 physical cores per node). We used 1 node, 1 task, and 4 CPUs per task.

\medskip

\noindent\textbf{Software.}
EEG preprocessing was performed in Python using
MNE-Python \citep{Gramfort2014MNE}, including the EEGBCI data-loading utilities
and the Morlet time-frequency transform. Tensor computations were performed in
R using \texttt{rTensor} \citep{rTensor}. Experiments involving generalized CP
tensor factorization used the \texttt{pyttb} implementation \citep{Hong_GCP_2020, TTB_Software}.

\end{document}